\documentclass[aps,10pt,twocolumn,showpacs,pra,amsmath,amssymb,floatfix,superscriptaddress,dvipsnames]{revtex4-2}

\usepackage{amsthm}
\usepackage{mathtools}
\usepackage{physics}
\usepackage{bbm}
\usepackage{amsfonts}
\usepackage{amssymb}
\usepackage{graphicx}

\usepackage[T1]{fontenc}
\usepackage{bbold}
\usepackage{float}
\usepackage[utf8]{inputenc}
\usepackage{csquotes}
\usepackage{tikz-cd}
\usepackage[normalem]{ulem}
\usetikzlibrary{shapes, shapes.geometric, shapes.symbols, shapes.arrows, shapes.multipart, shapes.callouts, shapes.misc,decorations.pathmorphing}
\tikzset{snake it/.style={decorate, decoration=snake}}
\usepackage{tcolorbox}




  

\usepackage[colorlinks=true,linktocpage=true]{hyperref}
\hypersetup{
    allcolors  = {blue},
}

\theoremstyle{plain}
\newtheorem{theorem}{Theorem}

\newtheorem{definition}{Definition}
\newtheorem{proposition}{Proposition}
\newtheorem{corollary}{Corollary}

\definecolor{selectiveyellow}{rgb}{0.8, 0.0, 0.8}

\usepackage{array}
\usepackage{multirow}
\usepackage{xcolor}

\usepackage[utf8]{inputenc}
\usepackage[english]{babel}
\usepackage[T1]{fontenc}

\usepackage{amsfonts}
\usepackage{dsfont}

\usepackage{enumerate}
\usepackage{braket}
\usepackage{framed}
\usepackage{graphicx}

\usepackage{hyperref}
\usepackage{cleveref}
\usepackage{natbib}
\usepackage{csquotes}
\usepackage{xcolor}
\usepackage[normalem]{ulem}


\usepackage{bbold}

\newtheorem{lemma}[theorem]{Lemma}


\usepackage{soul}


\begin{document}

\newcommand{\inl}{INL -- International Iberian Nanotechnology Laboratory, Av. Mestre Jos\'{e} Veiga s/n, 4715-330 Braga, Portugal}
\newcommand{\inlshort}{INL -- International Iberian Nanotechnology Laboratory, Braga, Portugal}
\newcommand{\uff}{Instituto de F\'{i}sica, Universidade Federal Fluminense, Av. Gal. Milton Tavares de Souza s/n, Niter\'{o}i -- RJ, 24210-340, Brazil}
\newcommand{\uffshort}{Instituto de F\'{i}sica, Universidade Federal Fluminense, Niter\'{o}i (RJ), Brazil}
\newcommand{\cfum}{Centro de F\'{i}sica, Universidade do Minho, Campus de Gualtar, 4710-057 Braga, Portugal}
\newcommand{\cfumshort}{Centro de F\'{i}sica, Universidade do Minho, Braga, Portugal}
\newcommand{\cfup}{Departamento de F\'{i}sica e Astronomia, Faculdade de Ciências, Universidade do Porto, rua do Campo Alegre s/n, 4169–007 Porto, Portugal.}
\newcommand{\cfupshort}{Departamento de F\'{i}sica e Astronomia, Faculdade de Ciências, Universidade do Porto, Porto, Portugal.}
\newcommand{\ist}{Departamento de Engenharia Electrotécnica e de Computadores, Instituto Superior Técnico, Av. Rovisco Pais, 1049-001, Lisbon, Portugal.}
\newcommand{\istshort}{Departamento de Engenharia Electrotécnica e de Computadores, Instituto Superior Técnico, Lisbon, Portugal.}
\newcommand{\itel}{Instituto de Telecomunicações, Av. Rovisco Pais, 1049-001, Lisbon, Portugal.}
\newcommand{\itelshort}{Instituto de Telecomunicações,  Lisbon, Portugal.}

\newcommand{\infoumshort}{Departamento de Informática, Universidade do Minho, Braga, Portugal}
\newcommand{\inescshort}{HASLab, INESC TEC, Universidade do Minho, Braga, Portugal}

\title{Certifying nonstabilizerness in quantum  processors}

\author{Rafael Wagner}
\email[These authors contributed equally. \\ Corresponding author: ]{rafael.wagner@inl.int}
\affiliation{\inlshort}
\affiliation{\cfumshort}

\author{Filipa C. R. Peres}
\email[These authors contributed equally. \\ Corresponding author: ]{rafael.wagner@inl.int}
\affiliation{\inlshort}
\affiliation{\cfupshort}

\author{Emmanuel Zambrini Cruzeiro}
\affiliation{\istshort}
\affiliation{\itelshort}

\author{Ernesto F. Galv\~ao}
\affiliation{\inlshort}
\affiliation{\uffshort}

\date{\today}

\begin{abstract}
     Nonstabilizerness, also known as magic, is a crucial resource for quantum computation. The growth in complexity of quantum processing units (QPUs) demands robust and scalable techniques for characterizing this resource. We introduce the notion of \textit{set magic}: a set of states has this property if at least one state in the set is a non-stabilizer state. We show that certain two-state overlap inequalities, recently introduced as witnesses of basis-independent coherence, are also witnesses of multi-qubit set magic. We also show it is possible to certify the presence of magic across multiple QPUs without the need for entanglement between them and reducing the demands on each individual QPU. 
\end{abstract}

\maketitle

\textit{Introduction.}-- The certification of quantum devices is a crucial task~\cite{eisert2020quantum_certification}. One fundamental characteristic of quantum computing hardware is the ability to generate non-classical resources: quantum coherence~\cite{streltsov2017colloquium,baumgratz2014quantifying}, quantum entanglement~\cite{horodecki2009entanglement,erhard2020advances}, nonstabilizerness~\cite{bravyi2005universal}, Hilbert space dimension~\cite{brunner2008testing}, quantum contextuality~\cite{budroni2022kochen}  are all necessary resources for quantum information processing. The growth in complexity of near-term noisy devices~\cite{preskill2018NISQ,cai2023quantum,bharti2022NISQ} demands scalable and robust methods for witnessing non-classical properties.

In this Letter, we propose a technique for certifying the presence of nonstabilizerness in a network of quantum processing units (QPUs), without the need to entangle resources between separate units. Our protocol is efficient, robust, and based on estimating two-state overlaps $r_{i,j} = \text{Tr}(\rho_i \rho_j)$ (also known as fidelity, if one of the states is pure), as sketched in Fig.~\ref{fig: intro}. Interestingly, despite not entangling different units, we can use this scheme to certify nonstabilizerness among various QPUs. We envision many use cases. First, certifying the presence of nonstabilizer resources in a single QPU. Secondly, certifying the presence of nonstabilizerness generated by multiple QPUs in a network; for example, a referee collects overlap statistics from different QPUs and, after processing these values (i.e., calculating linear functionals of overlaps), can infer that at least some QPUs generated nonstabilizer resources. Finally, certifying nonstabilizerness present in $n$-qubit systems and not realizable by systems with fewer qubits.
\begin{figure}[t]
    \centering
    \includegraphics[width=0.38\textwidth]{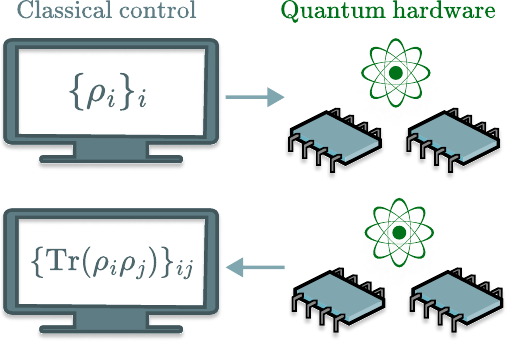}
    \caption{An experimentalist uses a quantum processing unit~(QPU) to evaluate two-state overlaps $r_{i,j} = \text{Tr}(\rho_i\rho_j)$ for a given set of states either generated by that same QPU or by a state-preparation device of a third party. With this information, and using the scheme we propose, they can benchmark nonclassical resources: nonstabilizerness, coherence, and Hilbert space dimension. This certification is agnostic to the procedure used by the QPU to compute the overlaps.}
    \label{fig: intro}
\end{figure}

Many tools for certifying nonstabilizerness exist. However, as the 
majority is aimed at the task of  \emph{resource quantification} -- in the formal resource theoretic sense~\cite{chitambar2019quantumRTs} -- rather than the simpler one of \emph{witnessing}, most protocols have strict requirements. Most commonly, there is a need for: (i) full tomographic information of quantum states~\cite{campbell2011catalysis}, (ii) purity of states (or specific subclasses of states)~\cite{haug2023scalable,gross2021schurweyl,haug2023efficient}, (iii) additional entanglement generation~\cite{tirrito2023quantifying,turkeshi2023measuring}, or (iv) vertex characterization of the stabilizer polytope~\cite{liu2022many,bravyi2019simulationofquantum,bravyi2016improved,bravyi2016trading,veitch2014resource}. We avoid all these requirements. Our certification scheme depends on two promises of the device-generated statistics: (a) the data is described by two-state overlaps and (b) the device consists of a multi-qubit system. For certain witnesses, it will be possible to relax the last requirement. 

Our method is based on showing that some inequalities, initially introduced as coherence witnesses~\cite{galvao2020quantum,wagner2023inequalities}, can also be used as witnesses of nonstabilizerness. The basis-independent nature of such inequalities implies the existence of a relational notion of nonstabilizerness defined for an ensemble of states, that we term \textit{set magic}. The idea of viewing ensemble-based resources has recently been widely studied within quantum resource theories~\cite{designolle2021set,salazar2022resource,buscemi2020complete,martins2020quantum,uola2019quantifying,ducuara2020multiobject}. This novel foundational understanding is at the core of the application we envision, allowing us to distribute a quantum certification protocol among various parties in a network of QPUs.

\textit{Background: Stabilizer subtheory.}-- In this work, we will focus on the stabilizer subtheory of $n$-qubit systems. We say that a pure quantum state $\vert \psi \rangle$ is a stabilizer state when it is the eigenvector with eigenvalue $+1$ for all elements of a maximal abelian subgroup of the Pauli group $\mathcal{P}_n$. The Pauli group is formed by all possible $n$-qubit Pauli operators, multiplied by phases $\pm 1$ and $\pm i$. The dynamics of this subtheory is described in terms of Clifford operations, defined to be those that preserve the Pauli group under conjugation, i.e., $C \mathcal{P}_n C^\dagger = \mathcal{P}_n$. 

For many different architectures, it is relatively easy to perform Clifford operations and prepare stabilizer states. This subtheory is crucial to applications in fault-tolerant magic-state injection schemes and quantum error correction, and much work has been done to understand its geometrical properties. Ref.~\cite{garcia2017onthegeometry} showed that stabilizer states have a fixed overlap structure. Recall that any two pure states $\vert \psi\rangle, \vert \phi \rangle \in \mathcal{H}$ can be orthogonal ($|\langle \psi | \phi \rangle| = 0$), parallel ($|\langle \psi | \phi \rangle| = 1$) or oblique ($0<|\langle \psi|\phi \rangle| < 1$). When two \emph{pure} stabilizer states are oblique, their two-state overlap obeys the following result.
\begin{theorem}[Adapted from Ref.~\cite{garcia2017onthegeometry}]\label{theorem: quantization}
    Given two oblique $n$-qubit pure stabilizer states $\vert \psi \rangle$ and $\vert \phi \rangle$ their overlap is given by $\vert \langle \phi \vert \psi \rangle \vert^2 = 2^{-k}$ for some $k \in \{1,2,\dots,n\}$.
\end{theorem}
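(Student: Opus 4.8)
The plan is to compute the overlap directly from the group structure of the stabilizers, reducing $|\langle\phi|\psi\rangle|^2$ to a character sum whose value is forced to be a power of two. First I would use the standard fact that a stabilizer-state projector is the uniform average over its stabilizer group: if $S_\psi \subset \mathcal{P}_n$ is the maximal abelian subgroup stabilizing $\vert\psi\rangle$, with $\vert S_\psi\vert = 2^n$, then $\vert\psi\rangle\langle\psi\vert = 2^{-n}\sum_{g\in S_\psi} g$, and similarly for $\vert\phi\rangle$. Substituting into $\vert\langle\phi\vert\psi\rangle\vert^2 = \text{Tr}(\vert\phi\rangle\langle\phi\vert\,\vert\psi\rangle\langle\psi\vert)$ gives the double sum $2^{-2n}\sum_{h\in S_\phi}\sum_{g\in S_\psi}\text{Tr}(hg)$. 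I would then invoke the orthogonality of Pauli operators under the trace: $\text{Tr}(hg)=0$ unless $h$ and $g$ share the same underlying Pauli string up to sign, in which case $\text{Tr}(hg)=\pm 2^n$.

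The second step is to organize the surviving terms. Since $-I$ stabilizes no state, each Pauli string occurs in $S_\psi$ with at most one sign, so the projection $S_\psi \to \mathbb{Z}_2^{2n}$ onto underlying strings is injective with image a subgroup $G_\psi$ of order $2^n$; write $g = s_\psi(P)\,P$ for the unique stabilizer element carrying string $P$. The nonzero contributions are then indexed by the intersection subgroup $K = G_\psi \cap G_\phi$, which has order $2^m$ for some $0\le m\le n$, and the overlap collapses to $\vert\langle\phi\vert\psi\rangle\vert^2 = 2^{-n}\sum_{P\in K} s_\psi(P)\,s_\phi(P)$.

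The crux, and the step I expect to be the main obstacle, is to show that $\chi(P) := s_\psi(P)\,s_\phi(P)$ is a genuine homomorphism from $K$ to $\{\pm 1\}$. Multiplying two stabilizer elements with strings $P,Q$ introduces a Pauli cocycle phase, $PQ = \omega(P,Q)\,(P\oplus Q)$; the key observation is that every pair of strings in $K$ commutes, since both lie in abelian groups, which forces $\omega(P,Q)\in\{\pm1\}$. Tracking this phase through the group law of $S_\psi$ and of $S_\phi$ and multiplying, the cocycle enters only through $\omega(P,Q)^2 = 1$, leaving $\chi(P)\chi(Q)=\chi(P\oplus Q)$. Carefully bookkeeping these phases through the Pauli multiplication rule is the delicate part. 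Once $\chi$ is a character, the orthogonality dichotomy for characters of a finite abelian group gives $\sum_{P\in K}\chi(P)\in\{0,\,2^m\}$, hence $\vert\langle\phi\vert\psi\rangle\vert^2\in\{0,\,2^{-(n-m)}\}$. Finally, obliqueness excludes the orthogonal value $0$ and the parallel case $m=n$, so the overlap equals $2^{-k}$ with $k=n-m\in\{1,\dots,n\}$, as claimed.
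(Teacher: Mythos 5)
Your proof is correct. Note that the paper does not actually prove Theorem~\ref{theorem: quantization}: it is imported verbatim (``adapted from'') Ref.~\cite{garcia2017onthegeometry}, so there is no in-paper argument to compare against. The cited reference derives the overlap quantization from the computational-basis normal form of a stabilizer state --- amplitudes uniformly supported on an affine subspace of $\mathbb{F}_2^n$ with quadratic $\pm 1,\pm i$ phases --- and computes the inner product as an exponential sum over that subspace. Your route is genuinely different and arguably cleaner: expanding each projector as the group average $2^{-n}\sum_{g\in S}g$, killing all cross terms by Pauli trace orthogonality, and reducing the overlap to $2^{-n}\sum_{P\in K}\chi(P)$ with $K=G_\psi\cap G_\phi$ a subgroup of order $2^m$. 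The step you flag as delicate is handled correctly: since any two strings in $K$ lie in a common abelian stabilizer group they commute as operators, which forces the Pauli cocycle $\omega(P,Q)$ to be $\pm 1$ (from $\omega(P,Q)\omega(Q,P)=1$ and $\omega(P,Q)=\omega(Q,P)$), so it cancels in the product $s_\psi s_\phi$ and $\chi$ is a genuine character of $K$; character orthogonality then yields the dichotomy $\{0,2^{m-n}\}$, and obliqueness excludes $0$ and $m=n$. One small point worth making explicit if you write this up: each element of a state's stabilizer group really is $\pm P$ for a bare Hermitian string $P$ (the phases $\pm i$ are excluded because $g^2=\mathbb{1}$ on the stabilized state), which is what makes your sign functions $s_\psi,s_\phi$ well defined; and injectivity of $S_\psi\to G_\psi$ follows from $-\mathbb{1}\notin S_\psi$. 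Your approach is basis-free and self-contained, which fits the spirit of the paper better than the coordinate computation in the reference.
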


The convex hull of all $n$-qubit pure stabilizer states, for each fixed value of $n\geq 1$, forms a polytope that we will refer to as STAB. Any state outside of STAB is termed in the literature \textit{magic} or \textit{nonstabilizer}. Note that Theorem~\ref{theorem: quantization} presents a simple method for witnessing magic of \emph{pure} states using overlaps. Any deviation from the values $1/2^k$ witnesses the presence of such a resource. Beyond that, purity allows efficient schemes to quantify nonstabilizerness. Refs.~\cite{haug2023efficient,haug2023scalable} provide \emph{quantifying} protocols that are close to being optimally efficient. However, without the assumption of purity (which is never perfectly attained experimentally), any overlap is possible by states inside of STAB, and the task of witnessing magic becomes non-trivial. The test we propose, although based on two-state overlaps, does not require purity of states.

\textit{Background: Coherence witnesses based on two-state overlaps.}-- Initially motivated by benchmarking various resources in linear optical devices -- Hilbert space dimension~\cite{giordani2021witnesses}, quantum coherence~\cite{giordani2021witnesses}, indistinguishability~\cite{brod2019witnessing,giordani2020experimental} -- Refs.~\cite{galvao2020quantum,wagner2023inequalities} proposed an 
inequality formalism based solely on linear functionals of two-state overlaps, defined by $r_{i,j} = \text{Tr}(\rho_i\rho_j)$ for any two states $\rho_i$ and $\rho_j$ over the same Hilbert space. 

Consider edge-weighted graphs $(\mathcal G,r)$ where $\mathcal G = (V(\mathcal G),E(\mathcal G))$ is a graph~\footnote{A graph is an ordered pair $(V(\mathcal G),E(\mathcal G))$ where $V(\mathcal G), E(\mathcal G)$ are sets. The first set $V(\mathcal G)$ is interpreted as a set of nodes (or vertices) for the graph, hence a simple set of labels. The second set $E(G)$ corresponds to elements of the form $e = \{i,j\}$ such that $i,j \in V(\mathcal G)$, i.e., the edges of $\mathcal G$. }, and $r: E(\mathcal G) \to [0,1]$ is a function. We refer to fully connected finite simple graphs $\mathcal{G}$ as \textit{event graphs}~\footnote{A finite graph is one in which $|V(\mathcal{G})|<\infty$. A simple graph is undirected, no pair of nodes can have more than one edge, and there exist no edges of the form $\{v,v\}$ for $v \in V(\mathcal{G})$. Finally, a fully connected graph is one in which for any two vertices $v,w$ there exists a family of edges connecting the two.}, as introduced in Ref.~\cite{wagner2023inequalities}. Any  $(r_e)_{e \in E(G)}$ is merely a tuple of numbers. We will be interested in the problem of deciding when the numbers $r_{i,j}$ in these tuples are realizable by quantum states, thus equating to $\text{Tr}(\rho_i \rho_j)$; this is an instance of a \emph{quantum realization problem}~\cite{fraser2023realization}. More formally, let $\mathcal{D}(\mathcal{H})$ represent the set of all quantum states with respect to a system $\mathcal{H}$ and consider a finite state ensemble $\underline\rho \equiv \{\rho_i\}_i \subset \mathcal{D}(\mathcal{H})$. Denote $r(\underline\rho) \equiv r(\{\rho_i\}_i) := (\text{Tr}(\rho_i\rho_j))_{\{i,j\} \in E(\mathcal G)}$; we will refer to $r(\underline\rho)$ as a \emph{quantum realization} for a given edge-weight $r$. Tuples $r \in [0,1]^{|E(\mathcal{G})|}$ can have any number of quantum realizations, including none at all~\cite{wagner2023inequalities}.

We proceed to discuss the notion of coherence for a state ensemble~\cite{designolle2021set,galvao2020quantum}, that differs from the commonly described basis-dependent view on coherence~\cite{baumgratz2014quantifying,streltsov2017colloquium} and its witnesses~\cite{napoli2016robustness,wu2021experimental,silva2023coherencewitnessing}. It also differs from the basis-independent coherence discussed in  Refs.~\cite{ma2019operational,radhakrishnan2019basis,yao2016frobenius} defined with respect to a single state $\rho$ instead of sets $\underline\rho$. When there exists some unitary $U$ such that a set of states $\underline\rho$  satisfy $U \underline\rho U^\dagger = \underline\sigma$ with $\underline\sigma$ some set of diagonal density matrices, we say that the state ensemble  $\underline\rho$ is set-incoherent~\footnote{Equivalently, a set of states is incoherent iff all its elements pairwise commute.}. Otherwise, we say that it is set-coherent~\cite{designolle2021set}. Set coherence is a basis-independent property of a set of states. Importantly, for any event graph $\mathcal{G}$, it is possible to bound two-state overlaps $r$ realized by incoherent ensembles: in such cases, incoherent states in the ensemble satisfy $\underline\sigma \ni \sigma = \sum_{\lambda \in \Lambda} \sigma_{\lambda \lambda} \vert \lambda \rangle \langle \lambda \vert$, for some $\mathcal{H}$ with $d=\dim(\mathcal{H})$ and some basis $\Lambda = \{\vert \lambda \rangle\}_{\lambda = 1}^d$.

For any fixed event graph $\mathcal{G}$, the set of all possible edge-weights $r$ realizable by some set-incoherent ensemble $\underline\sigma$ forms a full-dimensional convex polytope, denoted by $\mathfrak{C}(\mathcal{G})$~\cite{wagner2023inequalities}. Some convex polytopes $\mathfrak{C}(\mathcal{G})$ have been completely characterized for certain event graphs $\mathcal{G}$. The facet-defining inequalities for $\mathfrak{C}(\mathcal C_m)$, with $\mathcal{C}_m$ the cycle graph of $m$-nodes, were presented in Ref.~\cite{galvao2020quantum} and are the so-called $m$-cycle inequalities

\begin{equation}
    c_m(r) := -r_{e} +\sum_{{e'} \neq {e}} r_{e'} \leq m-2,\;\; \text{for each $e \in E(\mathcal C_m)$.} \label{eq:ncycle}
\end{equation}
For any convex-linear functional $h(r)$, we denote $h(r(\underline\rho))$ its value attained by some quantum realization $r(\underline\rho)$. Violations $c_m(r(\underline\rho))>m-2$ witness the impossibility of the overlaps to be realized by incoherent ensembles of states, i.e., they witness set coherence of any such ensemble $\underline\rho$. We resolve two open questions from Ref.~\cite{galvao2020quantum}. First, in Appendix~\ref{app: SDP}, using semidefinite programming (SDP) relaxation tools~\cite{tavakoli2023semidefinite}, we obtain tight maximal violations of such inequalities up to $m=7$, and lower bounds up to $m=20$. Secondly, in Appendix~\ref{app: proof of cycle results}, we provide a proof of the following theorem.
\begin{theorem}\label{theorem: qubits cycle attainable}
    The maximal quantum values of $c_m(r)$, for any $m\geq 3$ are attainable with sets $\underline\psi \subseteq \mathcal{D}(\mathbb{C}^2)$ of pure single-qubit states.
\end{theorem}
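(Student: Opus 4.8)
The plan is to reduce first to pure states and then to show that any optimal pure configuration can be rotated into a common two-dimensional subspace, which is the step that genuinely pins the dimension down to a single qubit. First I would fix an ambient space $\mathcal{H}$ of arbitrary finite dimension $d$ and observe that $c_m$ is a continuous functional of $(\rho_1,\dots,\rho_m)\in\mathcal{D}(\mathcal{H})^m$ on a compact domain, so a maximizer exists. Since $r_{i,j}=\mathrm{Tr}(\rho_i\rho_j)$ is linear in each $\rho_i$ separately and $c_m$ is a linear functional of the overlaps, freezing all but one state turns $c_m$ into a linear functional of that single state, whose maximum over the convex set $\mathcal{D}(\mathcal{H})$ is attained at an extreme point, i.e.\ a pure state. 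Iterating over all $m$ slots, the global maximizer may be taken with every $\rho_i=|\psi_i\rangle\langle\psi_i|$ pure. It then suffices to prove that $V_d:=\max c_m$ over pure states in $\mathbb{C}^d$ obeys $V_d=V_2$ for all $d\ge 2$; as qubits embed isometrically into any $\mathbb{C}^d$ we already have $V_2\le V_d$, so only $V_d\le V_2$ needs work.

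Next I would exploit local optimality. Each state $|\psi_k\rangle$ lies on exactly two edges of $\mathcal{C}_m$, carrying coefficients in $\{+1,-1\}$, with exactly one $-1$ edge overall. Fixing the two neighbours $|\psi_{k-1}\rangle,|\psi_{k+1}\rangle$, the dependence of $c_m$ on $|\psi_k\rangle$ is the quadratic form $\langle\psi_k|M_k|\psi_k\rangle$ with $M_k=c_{k-1,k}\,|\psi_{k-1}\rangle\langle\psi_{k-1}|+c_{k,k+1}\,|\psi_{k+1}\rangle\langle\psi_{k+1}|$. At a maximum $|\psi_k\rangle$ must be a leading eigenvector of $M_k$; but $M_k$ has rank at most two with support $\linspan\{|\psi_{k-1}\rangle,|\psi_{k+1}\rangle\}$, and its top eigenvalue is strictly positive — equal to $1+|\langle\psi_{k-1}|\psi_{k+1}\rangle|$ when both coefficients are $+1$, and to $\sqrt{1-|\langle\psi_{k-1}|\psi_{k+1}\rangle|^2}$ for the two states adjacent to the $-1$ edge. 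Hence, away from degeneracies, the leading eigenvector and therefore $|\psi_k\rangle$ lie in $\linspan\{|\psi_{k-1}\rangle,|\psi_{k+1}\rangle\}$.

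Finally I would collapse the dimension. Writing each optimal $|\psi_k\rangle=a_k|\psi_{k-1}\rangle+b_k|\psi_{k+1}\rangle$, an induction along the cycle starting from $W:=\linspan\{|\psi_1\rangle,|\psi_2\rangle\}$ shows all states lie in $W$: whenever $|\psi_{k-1}\rangle,|\psi_k\rangle\in W$ and $b_k\neq 0$ one solves for $|\psi_{k+1}\rangle\in\linspan\{|\psi_{k-1}\rangle,|\psi_k\rangle\}\subseteq W$. Since $\dim W\le 2$, the whole ensemble lives in a two-dimensional subspace, is unitarily equivalent to a set $\underline\psi\subseteq\mathcal{D}(\mathbb{C}^2)$ of pure qubit states realizing the same overlaps and hence the same value, so $V_d\le V_2$ as required. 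The hard part will be the degenerate case $b_k=0$, i.e.\ consecutive states becoming parallel (overlap one), or the two neighbours of the $-1$ edge coinciding so that $M_k$ vanishes and local optimality imposes no constraint. I would handle these either by an infinitesimal perturbation argument showing such configurations are not needed to reach the supremum, or by observing that a parallel consecutive pair lets one contract the cycle and conclude by induction on $m$, so that the two-dimensional conclusion persists in every case.
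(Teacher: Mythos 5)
Your proposal is correct in substance but follows a genuinely different route from the paper's. The paper also begins with the pure-state reduction (its Lemma~\ref{lemma: pure_state upper_bound}, proved by expanding each $\rho_i$ into pure components and using multilinearity --- equivalent to your extreme-point, one-slot-at-a-time argument), but then finishes very quickly: it rotates $\vert\psi_1\rangle$ to $\vert 0\rangle$, writes each consecutive overlap as $\cos^2(\theta_{i+1}-\theta_i)$ in terms of the angles $\theta_i$ to $\vert 0\rangle$, and reproduces those numbers with single-qubit states $\cos(\theta_x)\vert 0\rangle+\sin(\theta_x)\vert 1\rangle$. That step tacitly assumes the optimal states are coplanar --- for generic pure states in $\mathbb{C}^d$ the overlap $r_{i,i+1}$ is \emph{not} a function of $\theta_i$ and $\theta_{i+1}$ alone --- and it is exactly this coplanarity that your stationarity argument supplies: $\vert\psi_k\rangle$ must be a top eigenvector of $M_k=\pm\vert\psi_{k-1}\rangle\langle\psi_{k-1}\vert\pm\vert\psi_{k+1}\rangle\langle\psi_{k+1}\vert$, whose support is $\linspan\{\vert\psi_{k-1}\rangle,\vert\psi_{k+1}\rangle\}$, and the induction along the cycle then traps the whole ensemble in a two-dimensional subspace $W$. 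Your eigenvalue formulas ($1+c$ and $\sqrt{1-c^2}$ with $c=\vert\langle\psi_{k-1}\vert\psi_{k+1}\rangle\vert$) are correct, and the degeneracies you flag do close: if $b_k=0$ two consecutive states coincide, the corresponding edge-weight equals $1$, and (in the spirit of the paper's Lemma~\ref{lemma: equal_no_violation}) the functional contracts to $1+c_{m-1}$ on a shorter cycle, so induction on $m$ applies; if the contracted edge is the negative one the value is at most $m-2$, below the qubit optimum, so such a configuration is never maximal; and if $M_k=0$ the value does not depend on $\vert\psi_k\rangle$ at all, so you may simply redefine it inside $W$. In short, the paper's route buys brevity and an explicit optimal family of states, while yours buys a rigorous proof of the dimension collapse that the paper's argument glosses over.
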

We also show that all facet-defining overlap inequalities for $\mathfrak{C}(\mathcal{G})$, for any event graph $\mathcal{G}$, are maximally violated by sets of pure states, meaning that, in particular, pure states maximally violate $c_m(r)\leq m-2$. 

Another relevant family of inequalities, defined recursively, is the following:
\begin{equation}\label{eq: family}
    h_{m}(r) = h_{m-1}(r)+r_{1m}-\sum_{i=2}^{m-1}r_{im} \leq 1\,.
\end{equation}
The sequence above starts with $h_{3}(r) = r_{12}+r_{13}-r_{23} \equiv c_3(r)$ and   defines novel inequalities for any integer $m>3$. Each $h_m$ inequality was shown in Ref.~\cite{wagner2023inequalities} to be facet-defining for the polytope $\mathfrak{C}(\mathcal{K}_m)$, where $\mathcal{K}_m$ is the complete graph with $m$ nodes. It has been shown for up to $12$-qubit systems that $h_m(r)\leq 1$ cannot be violated by sets $\underline\rho \subseteq \mathcal{D}(\mathbb{C}^d)$ where $d\leq m-2$~\cite{giordani2023experimental}. Numerically, we can see maximal quantum violations for $m$ states $\underline\rho$ with dimension $d\geq m-1$. 

\textit{Generalizing magic to state ensembles.}-- We start the presentation of our main results by introducing the notion of set magic. 
\begin{definition}[Set magic]
Let $\underline\rho \subseteq \mathcal{D}(\mathcal{H})$ be a finite set of states and $d=\dim(\mathcal{H})<\infty$. We say $\underline\rho$ is \textit{set-magic} when there exists no unitary $U: \mathcal{H} \to \mathcal{H}$ such that $U\underline\rho U^\dagger = \underline\rho^{\prime}$ is a set of states within the stabilizer polytope. 
\end{definition}
Set magic implies magic of \textit{some} element in the set. Notably, the converse is \emph{not} true, as we will see in a moment. The definition above is a straightforward translation of set coherence~\cite{designolle2021set} to the context of magic as a quantum resource. Later, we will show that, under certain conditions, it is possible to bound the number of states $\rho \in \underline\rho$ that are outside of STAB, instead of merely witnessing that some are.

Set magic and set coherence are different notions. A simple way to see this is that various sets of stabilizer states $\underline\rho_{\mathrm{STAB}}$ (e.g., $\{\vert 0\rangle, \vert +\rangle,\vert+_i\rangle\}$) are set coherent, without being set magic. We can present another (more subtle) difference between the two notions. Given any basis-\emph{dependent} coherent pure state $\vert \psi \rangle$, with respect to some basis of reference $\{\vert i\rangle\}_i$, we have that $\{\vert \psi \rangle, \vert i\rangle\}$ must be set-coherent. In general, given a magic state $\vert \phi \rangle$ and a stabilizer state $\vert s \rangle \in $ STAB, we have that $\{\vert \phi \rangle, \vert s \rangle \}$ will \emph{not} necessarily be set-magic. For example, the set $\{\vert 0\rangle, \vert T \rangle \}$, with $\vert T \rangle := \frac{1}{\sqrt{2}}(\vert 0\rangle + e^{i\pi/4}\vert 1 \rangle)$, can be taken to a set of states inside STAB:
\begin{equation}
    T^\dagger \{\vert 0\rangle \langle 0 \vert , \vert T \rangle  \langle T \vert \}T \mapsto \{\vert 0 \rangle \langle 0 \vert, \vert + \rangle \langle + \vert \} \subset \text{STAB}\,,
\end{equation}
where $T \coloneqq \text{diag}(1,e^{i\pi/4})$. Hence, this pair is \emph{not} set-magic, although it clearly has nonstabilizerness in the usual sense.

\textit{Nonstabilizerness in QPUs.}-- In any certification protocol, the assumptions behind the test play a central role. In our case, we assume: (a) multi-qubit systems and (b) the ability to estimate two-state overlaps. The exact way in which this estimation is obtained may be unknown. Common techniques involve performing SWAP tests, Bell measurements, or generating specific prepare-and-measure statistics. Minimizing sample and measurement complexity is desired for scalability, although not essential for the certification \emph{per se}. Commonly, a minimal requirement is using a number of samples and measurements smaller than the one needed to make (ideal) quantum state tomography~\cite{flammia2012quantum,yuen2023improvedsample, Haah2016, O_Donnell2016,chen2023does}, which is of order $O(d_T^2/\varepsilon^2)$, where $\varepsilon$ is a fixed precision with respect to distance functions and $d_T$ is the dimension of the whole Hilbert space considered. For single QPUs with $n$ qubits and associated Hilbert space $\mathcal{H}=(\mathbb{C}^2)^{\otimes n}$, a network of $s \in \mathbb{N}$ units has space $\mathcal{H}^{\otimes s}$, implying $d_T=\dim(\mathcal{H}^{\otimes s})=2^{sn}$. 

As shown in Appendix~\ref{app: contextuality}, event-graph inequalities cannot witness nonstabilizerness in general. This is somewhat unsurprising, as they were not proposed for such a task. However, we show that \emph{some} event-graph inequalities \emph{are} witnesses of both nonstabilizerness and coherence, starting with the cycle inequalities.
\begin{theorem}\label{theorem: cycles witness magic}
    Every cycle inequality violation $c_m(r)> m-2$ is a robust witness of nonstabilizerness for any set $\underline\rho$ of multi-qubit states such that $r=r(\underline\rho)$.
\end{theorem}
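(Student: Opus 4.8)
The plan is to prove the contrapositive: if $\underline\rho$ is \emph{not} set-magic, then $c_m(r(\underline\rho)) \le m-2$. By definition, not being set-magic means there is a single unitary $U$ with $U\rho_i U^\dagger \in \text{STAB}$ for every $i$. Since two-state overlaps are unitarily invariant, $\text{Tr}(\rho_i\rho_j) = \text{Tr}(U\rho_iU^\dagger\, U\rho_jU^\dagger)$, the value of $c_m$ is unchanged, so I may assume outright that every $\rho_i$ already lies in $\text{STAB}$. It then suffices to establish the deterministic bound $c_m(r(\underline\rho)) \le m-2$ for all $\underline\rho \subseteq \text{STAB}$.

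Next I would reduce to pure stabilizer states. In the cycle $\mathcal{C}_m$ each vertex $i$ has degree two, so $\rho_i$ enters $c_m$ only through the two overlaps $\text{Tr}(\rho_{i-1}\rho_i)$ and $\text{Tr}(\rho_i\rho_{i+1})$, each linear in $\rho_i$. Hence $c_m$ is separately affine in every argument, i.e.\ multilinear on $\text{STAB}^{\times m}$, and a multilinear functional on a product of polytopes attains its maximum at a product of vertices. Since the vertices of $\text{STAB}$ are exactly the pure stabilizer states, $\max_{\underline\rho\subseteq\text{STAB}} c_m = \max_{\underline\psi} c_m$ over pure stabilizer states $|\psi_1\rangle,\dots,|\psi_m\rangle$. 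For these, Theorem~\ref{theorem: quantization} pins every overlap to the quantized set $r_{ij} = |\langle\psi_i|\psi_j\rangle|^2 \in \{0,1\}\cup\{2^{-k}\}_{k=1}^{n}$; in particular any two \emph{distinct} pure stabilizer states overlap by at most $1/2$.

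With any fixed subtracted edge I write $c_m = \sum_{\text{path}} r_e - r_{\mathrm{sub}}$, the $m-1$ path edges carrying the $+$ sign, and induct on $m$. If every path overlap is $\le 1/2$, then $c_m \le \sum_{\text{path}} r_e \le (m-1)/2 \le m-2$ for all $m\ge3$. Otherwise some path edge $\{i,i+1\}$ has overlap $1$, i.e.\ $|\psi_i\rangle=|\psi_{i+1}\rangle$; merging these vertices produces a cycle $\mathcal{C}_{m-1}$ of pure stabilizer states whose remaining overlaps are inherited (hence still quantized) and whose subtracted edge is untouched, so that $c_m = 1 + c_{m-1} \le 1+(m-3)=m-2$ by the inductive hypothesis. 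The base case $m=3$ obeys the same dichotomy: a unit $+$ edge forces the two overlaps from the third vertex to coincide, cancelling between the $+$ and $-$ terms to give $c_3=1$, while otherwise both $+$ overlaps are $\le 1/2$ and $c_3\le 1$.

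Because this bound holds over all of $\text{STAB}$ and not merely its vertices, the witness is robust: $c_m$ is a linear combination of directly estimated overlaps, so any value exceeding $m-2$ by more than the experimental error bars certifies that no common unitary maps $\underline\rho$ into $\text{STAB}$ --- that is, set magic --- with a finite margin guaranteed by the strict gap between $m-2$ and the quantum optimum (e.g.\ $5/4$ versus $1$ at $m=3$, cf.\ Theorem~\ref{theorem: qubits cycle attainable}). The step I expect to demand the most care is this inductive bound over pure stabilizer states: one must verify that the single-edge merge preserves the cyclic sign pattern and keeps every inherited overlap inside the quantized set of Theorem~\ref{theorem: quantization}, and that the dichotomy (some path overlap equal to $1$ versus all path overlaps $\le 1/2$) is genuinely exhaustive, so that no quantized configuration --- including those with several non-adjacent coincidences, reached by iterating the merge --- evades the estimate.
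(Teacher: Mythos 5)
Your proposal is correct and follows essentially the same route as the paper: reduce to pure stabilizer states by multilinearity/convexity (the paper's Lemma~\ref{lemma: pure_state upper_bound} and Theorem~\ref{theorem: functional max}), invoke the overlap quantization of Theorem~\ref{theorem: quantization} to bound distinct-state overlaps by $1/2$, and induct on $m$ via the dichotomy between all positive edges being at most $1/2$ and some edge equaling $1$. Your vertex-merge step is just a direct, state-level rendering of the paper's cross-section Lemma~\ref{lemma: equal_no_violation}, which reduces the $m$-cycle inequality to the $(m-1)$-cycle inequality when an edge-weight is fixed to one.
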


We defer the proof of this result to Appendix~\ref{app: proof of cycle results} but its underlying idea is extremely simple: After showing that pure states maximize any facet-defining inequality functional, we can use Theorem~\ref{theorem: quantization} to look only at some specific overlaps that, in principle, could violate the inequalities. We then show that any stabilizer realization satisfies $c_3(r(\underline\psi_{\text{STAB}})) \leq 1$ and use induction to show that the same must hold for all integers $m\geq 3$. Importantly, Theorem~\ref{theorem: cycles witness magic} holds for any set of multi-qubit states and not only for pure states.

Here, it is important to pause and clarify precisely the nature of the certifications allowed by Theorem~\ref{theorem: cycles witness magic}. Suppose we have a network with $s$ distinct units. Then, one possible task is to certify that each individual QPU can produce or has produced magic. This is illustrated in Fig.~\ref{fig: protocol}(a). Alternatively, we might be interested only in certifying that the network as a whole can generate magic \emph{somewhere} across its parties. In this case, we can distribute the state preparation and overlap estimation across different units (see Fig.~\ref{fig: protocol}(b)), reducing the number of overlaps that need to be evaluated in each QPU. Compared to other protocols~\cite{tirrito2023quantifying, turkeshi2023measuring} ours has the advantage that it foregoes the need to entangle the different parties in the network to carry out this certification.
\begin{figure}[t]
    \centering
    \includegraphics[width=0.9\columnwidth]{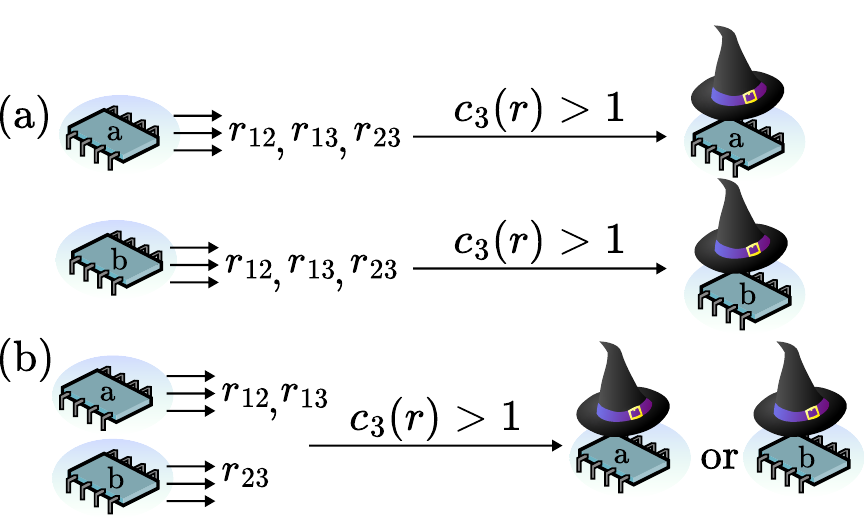}
    \caption{Certifying nonstabilizerness of quantum computing networks. (a) One can use our inequality witnesses for each device individually. (b) In case it is not necessary to certify that all devices have nonstabilizerness, but that \textit{some do}, it is possible to reduce the time usage by distributing the overlap computation. We use $c_3(r)$ as an example, but the same remains valid for $c_m(r)$ violations.
    }
    \label{fig: protocol}
\end{figure}

To make things even clearer, suppose that the certification of each unit in Fig.~\ref{fig: protocol}(a) requires the evaluation of $m$ overlaps. In total, the individual certification of \emph{all} units will require a total of $sm$ overlaps to be estimated. If we instead consider the distributed certification depicted in Fig.~\ref{fig: protocol}(b), the evaluation of $m$ overlaps suffices to certify the entire network for the presence of set magic. 

We take this opportunity to highlight two notable features of our work. First, suppose a single QPU in the network is capable of generating non-stabilizer resources. The argute reader will note that, rather than taking the distributed approach, by simply evaluating the overlaps on the proper QPU, one not only certifies the entire network for the presence ``\emph{somewhere}'', but further certifies the special unit. However, doing so requires prior knowledge of which QPU is capable of generating nonstabilizerness. If that is not the case, running the overlaps on a single (randomly chosen) QPU has a probability of success of $1/s$ whilst the distributed certification always succeeds. Secondly, we will see below that, in some cases, this same procedure can enable an even stronger certification, allowing us to guarantee that  \emph{all but one} QPU have produced non-stabilizer results.

We explained how to certify the production of magic (somewhere) in a given quantum hardware.  Alternatively, Theorem~\ref{theorem: cycles witness magic} can also be used for a second (subtly different) task. Suppose we are supplied with an ensemble of $m$ unknown states provided by third parties and are asked the question: Is the set magic? Then, we can use our network to answer this question by distributing the estimation of the overlaps among any number of the available QPUs. We inevitably need to evaluate $m$ overlaps to accomplish this task, regardless of how we choose to distribute this evaluation. Again, unlike previous protocols~\cite{tirrito2023quantifying, turkeshi2023measuring}, this has the added advantage of avoiding additional entanglement when certifying the presence of magic.

Note the two tasks are fundamentally different. Crucially, in the first, we want to \emph{certify} that magic can be produced within our hardware. In contrast, in the second, we aim to \emph{witness} the presence of magic in a set of states supplied by (any number of) third parties. In Appendix~\ref{app: comparison with others}, we make a comprehensive comparison between our scheme and other protocols.

From the known connection of event-graph inequalities to contextuality~\cite{budroni2022kochen,wagner2023inequalities}, the astute reader may question if such a result is in tension with or trivially follows from known results~\cite{budroni2022kochen}. In Appendix~\ref{app: contextuality}, we discuss how our work connects with contextuality theory and resolve these two points arguing that neither do our results trivially follow from known connections between contextuality and magic~\cite{schmid2022uniqueness,lillystone2019contextuality,howard2014contextuality}, nor are they in tension with them.

\textit{Witnessing nonstabilizerness in higher-dimensional systems.}-- Cycle inequalities certify the presence of nonstabilizerness but, as stated above, they are always maximally violated by sets $\underline\rho \subseteq \mathcal{D}(\mathbb{C}^2)$ of single-qubit states. This feature implies that such a certification scheme is not capable of capturing genuine properties of multi-qubit nonstabilizer resources, i.e., magic states in $\mathcal{D}(\mathbb{C}^{2^n})$. 

To address the possibility of witnessing nonstabilizerness that necessitates having access to higher Hilbert space dimensions, we study inequalities that witness \textit{both} of these properties. The simplest example of such inequalities is $h_4(r)\leq 1$ which requires two-qubit systems (or single qutrits) to have a violation, as shown in Ref.~\cite{giordani2023experimental}. We complement this result by showing that violations of $h_4(r)\leq 1$ also witness nonstabilizerness. 
\begin{theorem}\label{theorem: h4 is a magic witness}
    The inequality $h_4(r) \leq 1$ cannot be violated by quantum realizations $r=r(\underline\rho_{\mathrm{STAB}})$ of sets of stabilizer states $\underline\rho_{\mathrm{STAB}}$.
\end{theorem}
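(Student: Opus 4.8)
The plan is to reduce the statement to a finite, structural check on \emph{pure} stabilizer states and then exploit the overlap quantization of Theorem~\ref{theorem: quantization}. First I would invoke the same extremality argument underlying Theorem~\ref{theorem: cycles witness magic}: the functional $h_4$ in Eq.~\eqref{eq: family} is linear in each $\rho_i$ separately, and the stabilizer polytope STAB is the convex hull of the pure stabilizer states. Fixing all but one argument, $h_4$ attains its maximum over STAB at a vertex, i.e.\ a pure stabilizer state; iterating over the four slots shows that the maximum of $h_4(r(\underline\rho_{\mathrm{STAB}}))$ is realized by pure stabilizer states $\psi_1,\dots,\psi_4$. It therefore suffices to prove $h_4\le 1$ for four pure stabilizer states, writing $r_{ij}=|\langle\psi_i|\psi_j\rangle|^2$.

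Next I would use Theorem~\ref{theorem: quantization}: every overlap is either $0$, $1$, or $2^{-k}$ with $k\ge 1$, so every oblique overlap is at most $1/2$. Noting that $h_4$ is symmetric under permutations of the ``leaf'' indices $\{2,3,4\}$ while treating vertex $1$ as a center, I would case-split on the ``star'' overlaps $r_{12},r_{13},r_{14}$. If two or more of them equal $1$, the corresponding states coincide and $h_4$ collapses to a triangle functional already controlled by $c_3(r)\le 1$. If exactly one equals $1$, say $\psi_1=\psi_2$, then $\rho_1=\rho_2$ forces $r_{13}=r_{23}$ and $r_{14}=r_{24}$, so the positive and negative leaf contributions cancel and $h_4 = 1 - r_{34}\le 1$ immediately.

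The remaining and decisive case is when all three star overlaps are oblique, hence $r_{1j}\le 1/2$ and $r_{12}+r_{13}+r_{14}\le 3/2$. Here the negative base terms must absorb the excess, and the key input is quantum realizability: positivity of the $4\times 4$ Gram matrix $G_{ij}=\langle\psi_i|\psi_j\rangle$. Concretely, the Schur complement at vertex $1$ computes $\|\Pi\,\psi_1\|^2\le 1$, where $\Pi$ projects onto $\mathrm{span}\{\psi_2,\psi_3,\psi_4\}$; written out, this ties $r_{12}+r_{13}+r_{14}$ to the off-diagonal block and, through the $3\times 3$ Gram matrix $\tilde G$ of $\psi_2,\psi_3,\psi_4$, to $r_{23}+r_{24}+r_{34}$. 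The goal of this step is to upgrade the crude bound $r_{12}+r_{13}+r_{14}\le\lambda_{\max}(\tilde G)$ into the sharp statement $r_{12}+r_{13}+r_{14}\le 1+(r_{23}+r_{24}+r_{34})$, which is exactly $h_4\le 1$. To do this cleanly I would first Clifford-fix $\psi_1=|0\cdots 0\rangle$, so that $r_{1j}$ is the $|0\cdots0\rangle$-weight of each stabilizer state, and then use quantization to restrict the admissible triples $(r_{23},r_{24},r_{34})$ compatible with prescribed star weights.

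I expect this last case to be the main obstacle, for two reasons. The target bound is tight -- it is saturated, e.g., by $\psi_1=\psi_2=\psi_3$ with $\psi_4$ orthogonal, and by two-qubit embeddings such as $\{|00\rangle,|0+\rangle,|+0\rangle,|0-\rangle\}$ -- so no slack can be wasted and coarse spectral estimates (Gershgorin-type bounds on $\lambda_{\max}(\tilde G)$, or bounding $h_4$ by $\lambda_{\max}(\tilde G)-(r_{23}+r_{24}+r_{34})$) are provably too weak: the inequality $\lambda_{\max}(\tilde G)\le 1+(r_{23}+r_{24}+r_{34})$ fails for generic Gram spectra and is rescued only by the quantization. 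Consequently the difficulty is genuinely the phase-coupled positivity constraint: one must show that pushing all star overlaps up to $1/2$ forces the pairwise base overlaps to sum to at least $1/2$, and that intermediate quantized values trade off correctly. I anticipate handling this by reducing, after the Clifford-fixing, to a finite enumeration over the quantized overlap patterns that could conceivably violate the bound and checking each against Gram positivity, paralleling the inductive bookkeeping used for the cycle inequalities in Theorem~\ref{theorem: cycles witness magic}.
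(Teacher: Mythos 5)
Your overall architecture matches the paper's: reduce to pure stabilizer states by convexity, dispose of the repeated-state cases by direct cancellation, and then confront the case where all star overlaps are at most $1/2$ using Theorem~\ref{theorem: quantization}. The easy cases are handled correctly. However, the decisive case is left as a plan rather than a proof, and the specific plan has a genuine gap: Gram positivity combined with \emph{individual} overlap quantization is provably too weak to close the argument. Concretely, take the candidate pattern $r_{1,2}=r_{1,3}=r_{1,4}=1/2$ and $r_{2,3}=r_{2,4}=r_{3,4}=1/8$, so that $h_4(r)=3/2-3/8=9/8>1$ and every overlap lies in $\{2^{-k}\}$. The real symmetric Gram matrix
\begin{equation*}
G=\begin{pmatrix} 1 & a & a & a\\ a & 1 & b & b\\ a & b & 1 & b\\ a & b & b & 1\end{pmatrix},\qquad a=\tfrac{1}{\sqrt{2}},\; b=\tfrac{1}{2\sqrt{2}},
\end{equation*}
is positive semidefinite (its eigenvalues are $1-b>0$ twice and the two roots of a $2\times 2$ block with determinant $1+2b-3a^2\approx 0.207>0$), so your ``finite enumeration checked against Gram positivity'' would fail to eliminate this violating pattern even though it is not realizable by stabilizer states. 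The obstruction you correctly sense in the oblique case is therefore not resolvable by Schur complements plus per-edge quantization.

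What is missing is a \emph{joint} constraint on triples of stabilizer overlaps, which is exactly what the paper supplies: writing $\psi_1=\ket{0^n}$, the condition $r_{1,j}=2^{-N_j}$ forces $\psi_j$ to be a uniform superposition over $2^{N_j}$ computational-basis strings including $0^n$, whence any nonzero $r_{j,k}$ satisfies $r_{j,k}\geq 2^{-(N_j+N_k)}$ (Lemma~\ref{lemma: lower bound STAB overlaps}). In particular, two leaves with star overlap $1/2$ have mutual overlap in $\{0,1/4,1/2,1\}$ --- never $1/8$ --- which kills the counterexample above and drives the paper's case analysis. The paper also devotes a separate lemma (Lemma~\ref{lemma: no orthogonal states}) to the orthogonality cases, where the negative terms vanish and Gram positivity is least restrictive; your proposal folds these into the enumeration without addressing them, yet they require the most delicate structural arguments about supports $\mathcal{W}_j\subseteq\mathbb{F}_2^n$. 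To repair your proof you would need to replace the Gram-positivity filter with this stabilizer-specific lower bound on pairwise overlaps (or an equivalent support-counting argument) before the enumeration can close.
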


We defer the proof of this result to  Appendix~\ref{app: hm family}.

\textit{Robustness and scalability.}-- These inequalities, $c_m(r)\leq m-2$ and $h_4(r)\leq 1$, are our main nonstabilizerness witnesses. By default, they are both \emph{robust to incoherent noise} and \emph{scalable}. Robustness follows from the fact that they remain valid witnesses if, rather than pure states, we consider any set of states. Specifically, any state inside STAB  cannot violate these inequalities. On the other hand, scalability follows from the fact that (i) two-state overlaps are well-defined independently of any Hilbert space dimension, and (ii) quantum computers can efficiently estimate overlaps~\cite{fanizza2020beyond}. We refer to a protocol as scalable, or efficient, if one does not require exponentially increasing computational time, number of measurements, or samples.  Additionally, in our case, it is clear that scalability will also depend on the trade-off between dimension and number of overlaps estimated in a given inequality, for instance estimating $h_{2^n}(r)$ is certainly \emph{not} efficient. 

\textit{Full set magic.}-- We have seen that our inequality witnesses certify the presence of set magic for  $\underline\rho$. This indicates that \emph{some} state(s) in the set must lie outside of the STAB polytope. It is therefore natural to ask if it is possible to certify a lower bound on the number of states that must always lie outside of STAB. Clearly, for any set $\underline\rho$, at least one state $\rho_{\mathsf{ref}}$ can always be unitarily mapped inside the STAB. With this in mind, we introduce the following notion.
\begin{definition}[Full set magic] Let $\underline\rho \subseteq \mathcal{D}(\mathcal{H})$ be a finite set of states, $d=\dim(\mathcal{H})<\infty$. We say that the set $\underline\rho$ is fully set magic if, for every unitary $U:\mathcal{H}\to \mathcal{H}$, all states (but one) lie outside of the stabilizer polytope. 
\end{definition}

\begin{figure}[t]
    \centering
    \includegraphics[width=0.9\columnwidth]{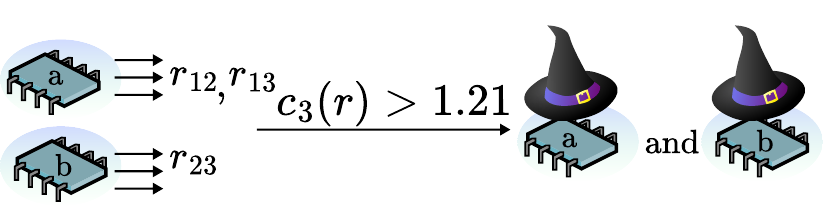}
    \caption{Full set magic certification. Fixing a certain target dimensionality (in this case single-qubit systems), one can certify all elements in a network from the same inequality values. Equivalently, one can certify various qubits in the same quantum computer running a parallel computation of the inequality values.}
    \label{fig: full set magic}
\end{figure}

There exist sets of states with full set magic. Let $\mathcal{H} = \mathbb{C}^2$ and consider any triplet $\{\rho_1,\rho_2,\rho_3\}$ of  states. It can be shown numerically that $c_3(r(\underline\rho))\leq 1.21$ if at most one state in $\underline\rho$ is outside of STAB. Recall that the maximal value of $c_3$  achieved with generic quantum states is $1.25$. Note that, for analyzing full set magic, we \emph{fix} the dimension. Therefore, there may be multi-qubit systems realizing values larger than $c_3(r) = 1.21$ while having only one magic state. In Appendix~\ref{app: full set magic}, we present numerical bounds for witnessing full set magic using other cycle inequalities and show numerical evidence that only odd cycle inequalities can witness full set magic. We also discuss concrete implementations of our certification scheme.

\textit{Discussion and further directions.}-- We have introduced novel witnesses of nonstabilizerness applicable to a single QPU and networks of QPUs. Our witnesses are robust in the sense of applying to pure and mixed states of any kind. They are also scalable, meaning they are independent of the system's dimension. This scalability, however, depends on the number of overlaps required to estimate a certain inequality. 

Various of the tools put forward either have their own technical interest or present novel theoretical opportunities. For instance, set magic and full set magic can be further investigated within the resource-theory framework. In this Letter, set magic arises naturally from using unitary invariants as our witnesses. There are several open questions on the possible operational importance of these results for quantum computation. Is it possible to connect set magic to the hardness of classically simulating a stabilizer circuit given a specific set of input states? Does a suitable \emph{quantifier} for set magic exist? Framing simulation within a unitary-invariant framework could better pinpoint the resources responsible for the exponential overhead of classical simulation and even lead to a unified view of different simulation schemes~\cite{masotllima2024stabilizer}.

In another direction, exploring if a similar effect to full set magic exists when considering the notion of set coherence could be of interest~\cite{designolle2021set,galvao2020quantum}. Moreover, for composite systems, large violations of the inequality $h_4(r) \leq 1$ may also require the presence of entanglement. We believe this is an intriguing aspect that merits further investigation.  Finally, we showed certain classes of inequalities that witness magic, but we do \emph{not} claim that these are the best ones. It is certainly possible that finding other overlap inequalities (or even going beyond two-state overlap to higher-order multivariate trace invariants) can result in better witnesses, simplifying experimental implementations.\\

We would like to thank Raman Choudhary for useful discussions. FCRP and RW acknowledge support from FCT -- Fundação para a Ciência e a Tecnologia (Portugal) through PhD Research Scholarship 2020.07245.BD, and PhD Grant SFRH/BD/151199/2021,  respectively. EZC acknowledges funding by FCT/MCTES through national funds and, when applicable, co-funding by EU funds under the project UIDB/50008/2020. EZC also acknowledges funding by FCT through project 2021.03707.CEECIND/CP1653/CT0002. EFG acknowledges support from FCT via project CEECINST/00062/2018. This work was supported by the ERC Advanced Grant QU-BOSS, GA no. 884676.

\bibliography{Bibliography}

\newpage

\noindent\textbf{Supplementary Material for: \textit{Certifying nonstabilizerness in quantum processors}}

\appendix

\tableofcontents

\section{Quantum bounds for non-stabilizer witnesses}\label{app: SDP}

\subsection{Lower bounds}

Let $f:\mathcal{D}(\mathcal{H})^{|V(\mathcal{G})|} \to \mathbb{R}$ be a convex-multilinear functional defined with respect to some facet-defining inequality of $\mathfrak{C}(\mathcal{G})$, for some $\mathcal{G}$, i.e., $f(\cdot) \equiv h(r(\cdot))$. In its generic form, $f$ is given by,
\begin{equation}
    f(\underline\rho) = \sum_{u,v \in V(\mathcal{G})} \alpha_{uv} \text{Tr}\left(\rho_u\rho_v\right),
\end{equation}
where $\alpha_{uv} \in \mathbb{Z}$ for all $u,v$. We are interested in evaluating the quantum bound of this functional, i.e., $\max_{\{\rho_x\}_{x=1}^N} f\,,$ over all possible quantum realizations, where $N = |V(\mathcal{G})|$.

To determine lower bounds for the quantum bound, we employ a seesaw of SDP. Indeed, by fixing all the states except one, the problem of maximizing the bound over the remaining state is an SDP.

\subsection{Upper bounds: hierarchy of semidefinite programming relaxations}

A way to evaluate upper bounds for this quantity is to adapt the Navascues-Vértesi (NV) hierarchy \cite{Navascues2015}  to this particular problem. In a similar spirit, we make a list of operators $\mathcal{S} = \{\mathds{1}, \{\rho_x\}\}$ and choose a degree of relaxation $k$. A relaxation of degree $k$ consists of keeping all products of at most $k$ operators from the list. The moment matrix is then constructed,
\begin{equation}
    \Gamma_{ij} = \text{Tr}\Big[S_i S_j\Big],
\end{equation}
where $S_i\in \mathcal{S}$. For a good introduction and review of SDP techniques in quantum information science see Ref.~\cite{tavakoli2023semidefinite}.

One samples a linearly independent basis of such moment matrices $\{\Gamma^{(1)},\dots, \Gamma^{(m)}\}$. The relaxation then consists of finding an affine combination $\Gamma = \sum_{j=1}^mc_j\Gamma^{(j)}\in f$, with $\Gamma\geq 0$, as follows,
\begin{equation}
\begin{aligned}
\max_{\vec{c}\in\mathbb{R}^m} \quad & f(\Gamma)\\
\textrm{s.t.} \quad & \Gamma \geq 0,\\
  & \sum_{j=1}^m c_j = 1.   \\
\end{aligned}
\end{equation}

\subsection{Maximal quantum violation of $m$-cycle inequalities}

\subsubsection{Qubit model}

The optimal solution of the maximization problem for $c_m(r)$ is a set of states $\{|\psi_x\rangle\}_{x=1}^m$ of the form
\begin{equation}\label{eq: states max cycles}
    |\psi_x\rangle = \cos (\theta_x^{(m)}) |0\rangle + \sin (\theta_x^{(m)})|1\rangle.
\end{equation}
where 
\begin{equation}
  \theta_x^{(m)} =
    \begin{cases}
      \frac{\pi}{2} - \frac{(x-1)\pi}{2m} & \text{if $m$ is odd,}\\
      \frac{\pi}{2} + \frac{(x-1)\pi}{2m} & \text{if $m$ is even.}
    \end{cases}       
\end{equation}

This family of optimal states is found explicitly using the SDP seesaw. Using the hierarchy, we could prove that the quantum bound is tight (to $10^{-5}$ precision) for this family of states up to $m = 8$. For larger values of $m$, $m \leq 20$, we have checked the lower bounds match the analytical formula up to the same precision. For  $m > 20$, we conjecture that this is the optimal quantum solution.

\subsubsection{Asymptotic behaviour}

Let us denote the set of optimal quantum states as $\underline\psi^{max}$. The maximum quantum violation is then given by
\begin{equation}
    c_m(r(\underline\psi^{max})) = (m-1)\cos^2\left(\frac{\pi}{2m}\right)-\cos^2\left(\left(1-\frac{1}{m}\right)\frac{\pi}{2}\right).
\end{equation}
On the other hand, the optimal incoherent quantum realization is $c_m(r(\underline\psi^{inc})) = m-2$. The fraction between these two quantities goes to $1$ as $m$ grows.
\begin{proposition}
    The limit of the ratio $c_m(r(\underline\psi^{inc}))/c_m(r(\underline\psi^{max}))$ when $m\to \infty$ is 1.
\end{proposition}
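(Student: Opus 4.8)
The plan is to substitute the two explicit formulas already derived in this subsection, namely $c_m(r(\underline\psi^{inc})) = m-2$ and $c_m(r(\underline\psi^{max})) = (m-1)\cos^2\!\left(\frac{\pi}{2m}\right)-\cos^2\!\left(\left(1-\frac{1}{m}\right)\frac{\pi}{2}\right)$, and then evaluate the limit directly. The cleanest route avoids any delicate asymptotic expansion: I would simply divide both numerator and denominator by $m$ and take limits term by term using the algebra of limits.

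Concretely, I would first rewrite the ratio as
\begin{equation}
    \frac{c_m(r(\underline\psi^{inc}))}{c_m(r(\underline\psi^{max}))} = \frac{\tfrac{m-2}{m}}{\tfrac{m-1}{m}\cos^2\!\left(\tfrac{\pi}{2m}\right) - \tfrac{1}{m}\cos^2\!\left(\left(1-\tfrac{1}{m}\right)\tfrac{\pi}{2}\right)}.
\end{equation}
The numerator equals $1 - 2/m \to 1$. For the denominator I would treat each summand separately. In the first term, $\tfrac{m-1}{m}\to 1$ and, since $\tfrac{\pi}{2m}\to 0$, continuity of cosine gives $\cos^2\!\left(\tfrac{\pi}{2m}\right)\to 1$, so the product tends to $1$. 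In the second term, $\tfrac{1}{m}\to 0$ while $\cos^2(\cdot)\in[0,1]$ is bounded, so the whole term vanishes in the limit; here one may optionally note the identity $\cos\!\left(\left(1-\tfrac{1}{m}\right)\tfrac{\pi}{2}\right) = \sin\!\left(\tfrac{\pi}{2m}\right)$ to make the decay explicit, but boundedness alone suffices. Hence the denominator tends to $1 - 0 = 1$, and the ratio tends to $1/1 = 1$.

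The one point that genuinely needs a remark, rather than an obstacle, is legitimacy of the division: I would observe that the computation above shows $c_m(r(\underline\psi^{max})) = m\bigl(1 + o(1)\bigr)\to\infty$, so the denominator is strictly positive for all large $m$ and the quotient is well defined. There is no hard analytic step — the result is essentially that both the incoherent bound and the maximal quantum value grow linearly in $m$ with the same leading coefficient, the $\cos^2$ corrections being lower order — so the main care is simply bookkeeping of the two trigonometric contributions and confirming they do not affect the leading $O(m)$ behaviour of the denominator.
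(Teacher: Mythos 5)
Your proposal is correct and follows essentially the same route as the paper: both arguments discard the bounded $\cos^2$ term as negligible against the linearly growing part and use $\cos^2\!\left(\tfrac{\pi}{2m}\right)\to 1$ to reduce the limit to $\lim_{m\to\infty}\tfrac{m-2}{m-1}=1$. Your explicit division by $m$ and the remark that the denominator is eventually positive are minor tidiness improvements over the paper's presentation, not a different method.
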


\begin{proof}
To see this, write
\begin{equation}
    \frac{c_m(r(\underline\psi^{inc}))}{c_m(r(\underline\psi^{max}))} = \frac{m-2}{(m-1)\cos^2(\frac{\pi}{2m})-\cos^2(\frac{1}{2}(1-\frac{1}{m})\pi)}
\end{equation}

The term $\cos^2(\frac{1}{2}(1-\frac{1}{m})\pi)$ is bounded, therefore the limit reduces to
\begin{align*}
    &\lim_{m\to\infty} \frac{c_m(r(\underline\psi^{inc}))}{c_m(r(\underline\psi^{max}))} = \lim_{m\to\infty} \frac{m-2}{(m-1)\cos^2(\frac{\pi}{2m})} = \\&\lim_{m\to\infty} \frac{m-2}{m-1} = 1.
\end{align*}
\end{proof}
\begin{figure}[t]
    \centering
    \includegraphics[width=\columnwidth]{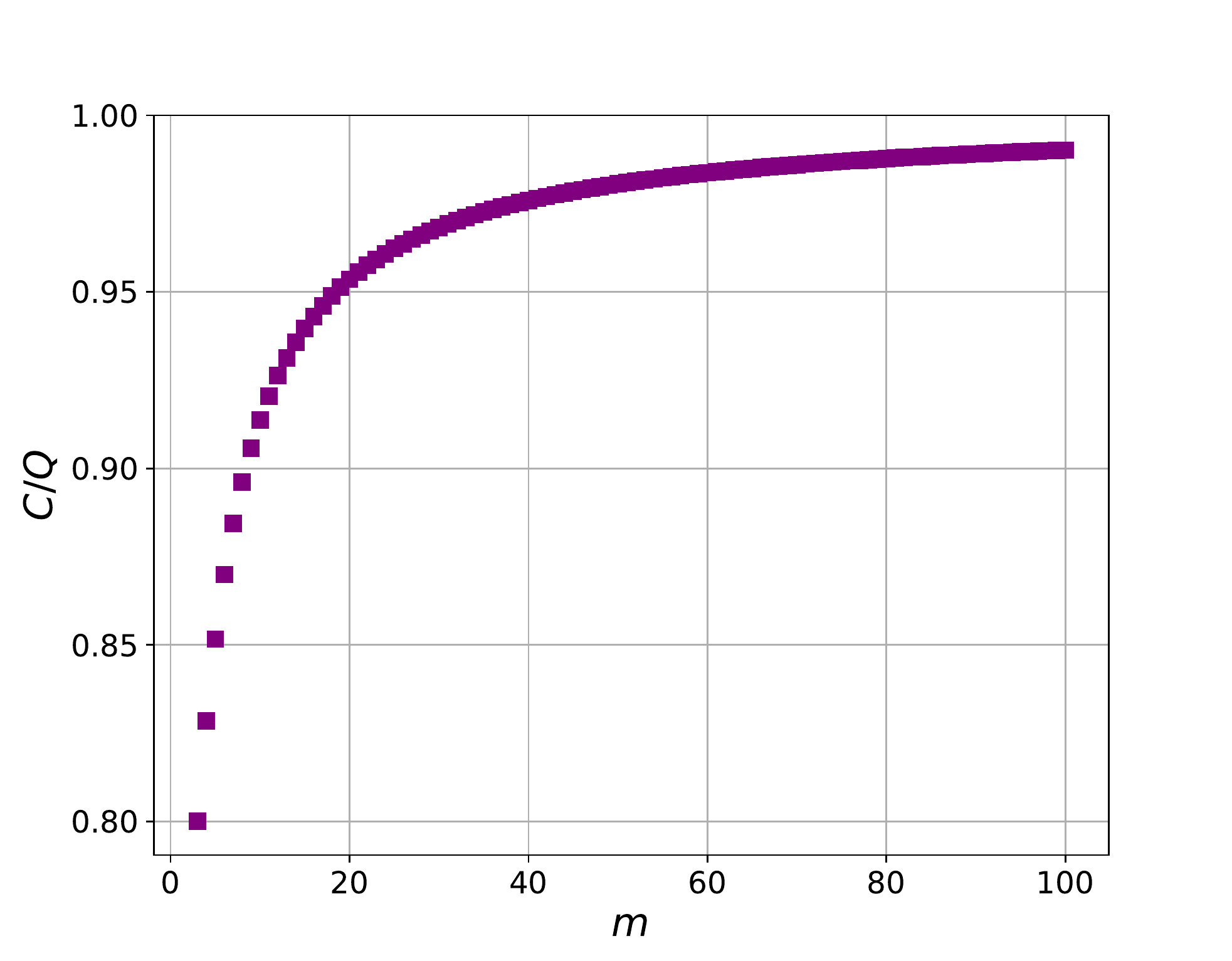}
    \caption{\textbf{Asymptotic behavior of $C/Q$ for the optimal solution of the cyclic inequalities.} (color online) Here, $C \equiv m-2$, which is the optimal value attainable by $c_m(r)$ with incoherent quantum realizations $r=r(\underline\rho^{inc})$. $Q$ represents the optimal value obtainable using any quantum state, attained with $\underline\rho = \underline\psi^{max}$ given by Eq.~\eqref{eq: states max cycles}. We present their fraction as a (continuous) function of $m$.}
    \label{fig:asymptotic_cyclic}
\end{figure}
Fig.~\ref{fig:asymptotic_cyclic} shows the ratio as a function of $m$, denoted, for simplicity, as $C/Q$. The numerics above suggest that pure single-qubit states always provide the optimal violation. In the following, we prove this to be true.

\subsubsection{Qubit optimality}

\begin{proof}[Proof of Theorem~\ref{theorem: qubits cycle attainable}]
    Let $\mathcal{C}_m$ be the $m$-cycle event graph and $r$ an edge-weight. We show later (see Lemma~\ref{lemma: pure_state upper_bound} of Appendix~\ref{app: proof of cycle results}) that, for any quantum realization $r=r(\underline\rho)$, there exists some pure state quantum realization $r=r(\underline\psi)$ for which $c_m(r(\underline\rho)) \leq c_m(r(\underline\psi))$. Therefore, for optimal violations, we can focus on pure state realizations. Let $r=r(\underline\psi)$ be any pure state quantum realization for an edge-weight $r$ of the event graph $\mathcal{C}_m$. Then, consider the unitary sending $\underline\psi = \{\vert \psi_1\rangle, \vert \psi_2\rangle , \dots, \vert \psi_m\rangle \}$ to $\underline{\tilde{\psi}} = \{\vert 0\rangle, \vert \Tilde{\psi}_2\rangle, \dots, \vert \Tilde{\psi}_m\rangle \}$. In this case, any overlap $r_{i,i+1}$ with respect to elements in $\underline{\tilde{\psi}}$ in the cycle will be equal to $\cos^2(\theta_{i+1}-\theta_{i})$. The angle $\theta_i$ is the angle between the vectors $\vert \tilde{\psi}_i\rangle$ with the reference state $\vert 0\rangle$, and can have any value $\theta_i \in [0,\pi)$, for all $i=1,\dots,m$. The above implies that we can write any $c_m(r(\underline\psi))$ as
    \begin{align*}
        c_m(r(\underline\psi))=&-r_{1,2}+r_{2,3}+\dots +r_{m-1,m} + r_{m,1} \\
        = &-\cos^2(\theta_2)+\cos^2(\theta_3-\theta_2)\\+&\dots+\cos^2(\theta_{m}-\theta_{m-1})+\cos^2(\theta_m).
    \end{align*}
    Note that this is independent of any dimensionality constraint. Nevertheless, each overlap in the tuple $(r_e)_{e \in \mathcal{C}_m}$ attaining the values above can also be obtained by the set of states
    \begin{align*}
        \vert q_1\rangle &= \vert 0\rangle, \\
        \vert q_2\rangle &= \cos(\theta_2)\vert 0\rangle + \sin(\theta_2)\vert 1\rangle \,, \\
        \vert q_3\rangle &= \cos(\theta_3)\vert 0\rangle + \sin(\theta_3)\vert 1\rangle \,, \\
        &\vdots \\
        \vert q_m\rangle &= \cos(\theta_m)\vert 0\rangle + \sin(\theta_m)\vert 1\rangle \,,
    \end{align*}
    that is, within the space $\mathbb{C}^2$. Hence, using just the single-qubit states $\underline q= \{\vert q_x\rangle\}_{x=1}^m$, we have that $c_m(r(\underline\psi)) = c_m(r(\underline q))$. In other words, \emph{any} value attained by $c_m(r)$ with pure quantum states of any dimension is also attained by using just single-qubit states. In particular, this will also be valid for the \emph{maximal} values of $c_m(r)$. This concludes the proof.
\end{proof}

\subsection{Maximum quantum violation of $h_4(r) \leq 1$} 

Again, using the methods described in this Appendix, we can show tight bounds for the maximal quantum realizations violating $h_4(r)\leq 1$, for sets $\underline\rho \subseteq \mathcal{D}(\mathbb{C}^d)$ with $d = 3, 4$. 

Note that $h_4$ (as any event-graph functional) satisfies $h_4(r(\underline\rho)) = h_4(r(\underline{U \rho U^\dagger }))$ for any unitary $U$ and $\underline{U \rho U^\dagger} \equiv \{U \rho_i U^\dagger \}_{i=1}^m$. This implies that any maximal quantum violation can be attained using Hilbert spaces of at most dimension $d=4$. We find a tight bound of $1+1/3$, up to $10^{-5}$ precision, for both $d = 3, 4$, proving that qutrits are sufficient to achieve the maximum quantum violation of $h_4$.

\section{Proof of Theorem~\ref{theorem: cycles witness magic}}\label{app: proof of cycle results}

Before proving  Theorem~\ref{theorem: cycles witness magic}, we present some novel technical results that are of general interest to the event-graph formalism introduced in Ref.~\cite{wagner2023inequalities}. 

Following the same notation used in the main text, for any given event graph $\mathcal{G}$, with edges $E(\mathcal{G})$ and vertices $V(\mathcal{G})$, we denote the polytope of all edge-weights $r: E(\mathcal{G}) \to [0,1]$ realizable by set-incoherent tuples as $\mathfrak{C}(\mathcal{G})$. The Hamming weight of a string (or, equivalently, a tuple) $s$ of 0/1-assignments equals the number of $1$ assignments in the string (or tuple) and is denoted $|s|_{\text{H}}$.
\begin{lemma}\label{lemma: equal_no_violation}
    Let $\mathcal G=\mathcal C_m$ be the $m$-cycle graph. Fix $m\geq 3$ and some $\tilde{e} \in E(\mathcal{C}_m)$. If we denote $$\mathfrak{C}_{\tilde{e}} := \{r \in \mathfrak{C}(\mathcal{C}_m)\,:\, r_{\tilde{e}}=1\},$$ we have that $\mathfrak{C}_{\tilde{e}} = \{1\} \times  \mathfrak{C}(\mathcal{C}_{m-1}) $. 
\end{lemma}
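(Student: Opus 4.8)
The plan is to establish the set equality $\mathfrak{C}_{\tilde e} = \{1\}\times\mathfrak{C}(\mathcal{C}_{m-1})$ by a double inclusion, ultimately reducing it to a comparison of hyperplane descriptions. Since the cycle graph is edge-transitive, I may assume without loss of generality that $\tilde e = \{1,2\}$: a relabelling of vertices permutes the overlap coordinates and preserves set-incoherent realizability, hence preserves $\mathfrak{C}(\mathcal{C}_m)$. Imposing $r_{\tilde e}=1$ should be thought of as \emph{identifying} the endpoints $1$ and $2$; the remaining edges $E(\mathcal{C}_m)\setminus\{\tilde e\} = \{\{2,3\},\{3,4\},\dots,\{m-1,m\},\{m,1\}\}$ are then in natural bijection with the edges of a cycle $\mathcal{C}_{m-1}$ on the merged vertex set $\{(12),3,4,\dots,m\}$, where the weight of $\{m,(12)\}$ is inherited from the original edge $\{m,1\}$. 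This bijection is exactly the projection that forgets the $\tilde e$-coordinate.

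For the forward inclusion I would use the elementary fact that for commuting density matrices $\text{Tr}(\sigma_i\sigma_j)=1$ forces $\sigma_i=\sigma_j$ to be a common rank-one projector: writing both states diagonally in a shared eigenbasis with probability vectors $p^{(i)},p^{(j)}$, the identity $\sum_\lambda p^{(i)}_\lambda p^{(j)}_\lambda = 1$ together with $\sum_\lambda p^{(i)}_\lambda = \sum_\lambda p^{(j)}_\lambda = 1$ pins both to the same point mass. Thus any set-incoherent ensemble realizing a point of $\mathfrak{C}_{\tilde e}$ has $\sigma_1=\sigma_2$ pure; dropping the duplicate and keeping $\sigma_2,\sigma_3,\dots,\sigma_m$ gives a set-incoherent ensemble whose overlaps are precisely the $\mathcal{C}_{m-1}$-coordinates, so the projection of the face lies in $\mathfrak{C}(\mathcal{C}_{m-1})$.

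The converse is where the real difficulty lies, and it is worth isolating. The naive lift of a realization $\{\tau_{(12)},\tau_3,\dots,\tau_m\}$ of a point $r'\in\mathfrak{C}(\mathcal{C}_{m-1})$ sets $\sigma_1=\sigma_2=\tau_{(12)}$ and $\sigma_j=\tau_j$; this reproduces every non-$\tilde e$ overlap but yields $r_{\tilde e}=\text{Tr}(\tau_{(12)}^2)$, which equals $1$ only if the merged-vertex state happens to be pure. A purely realization-theoretic argument would therefore require showing that \emph{every} point of $\mathfrak{C}(\mathcal{C}_{m-1})$ admits a commuting realization in which one designated vertex is rank-one — this is the main obstacle.

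I would bypass this obstacle by comparing complete facet descriptions rather than realizations. Using the known complete $H$-representation of $\mathfrak{C}(\mathcal{C}_m)$, namely the box constraints $0\le r_e\le 1$ together with the cycle inequalities $-r_{e}+\sum_{e'\neq e}r_{e'}\le m-2$ (one per edge $e$), I would substitute $r_{\tilde e}=1$. The cycle inequality whose special edge is $\tilde e$ becomes $\sum_{e\neq\tilde e} r_e \le m-1$, which is implied by the $m-1$ box constraints $r_e\le 1$ and is thus redundant. Each remaining cycle inequality, with special edge $e$ among $\{\{2,3\},\dots,\{m,1\}\}$, absorbs the constant $+1$ from $r_{\tilde e}$ and collapses to $-r_e+\sum_{e'\neq e,\tilde e}r_{e'}\le m-3$, which is exactly the cycle inequality of $\mathcal{C}_{m-1}$ for that edge (bound $(m-1)-2 = m-3$). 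Together with the surviving box constraints, these are precisely the defining inequalities of $\mathfrak{C}(\mathcal{C}_{m-1})$, so $\{r\in\mathfrak{C}(\mathcal{C}_m):r_{\tilde e}=1\}$ coincides with $\{1\}\times\mathfrak{C}(\mathcal{C}_{m-1})$, as claimed. This presumes $m\ge 4$ so that $\mathcal{C}_{m-1}$ is a genuine cycle; the degenerate base case $m=3$ I would treat by hand, noting that $\sigma_1=\sigma_2$ forces $r_{1,3}=r_{2,3}$ with the common value free in $[0,1]$, matching $\{1\}\times\mathfrak{C}(\mathcal{C}_2)$ under the single-edge reading of $\mathcal{C}_2$.
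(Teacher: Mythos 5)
Your proof is correct, but it runs along the dual route to the paper's. The paper treats $\mathfrak{C}_{\tilde e}$ as a face of the polytope and works with the $V$-representation: the extreme points of $\mathfrak{C}(\mathcal{C}_m)$ are the deterministic $0/1$ assignments with not exactly one zero entry, so the extreme points of the face $r_{\tilde e}=1$ are exactly $(1,\tilde s)$ with $|\tilde s|_{\mathrm H}\neq m-2$, which are in bijection with $\mathrm{ext}(\mathfrak{C}(\mathcal{C}_{m-1}))$. You instead substitute $r_{\tilde e}=1$ into the complete $H$-representation (box constraints plus the $m$ cycle inequalities) and check that the surviving inequalities are precisely those of $\mathfrak{C}(\mathcal{C}_{m-1})$; that computation is right, and it delivers both inclusions at once, which makes your separate realization-theoretic forward inclusion (via $\mathrm{Tr}(\sigma_1\sigma_2)=1\Rightarrow\sigma_1=\sigma_2$ pure) logically redundant, though it is a correct and instructive consistency check. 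Both arguments lean on the same external input from Ref.~\cite{galvao2020quantum} --- you on the claim that the box and cycle inequalities are a \emph{complete} facet description, the paper on the equivalent claim about which deterministic assignments are the vertices --- so neither is more self-contained than the other. Two points in your favour: you explicitly identify the genuine obstruction to a naive lifting argument for the converse (the merged-vertex state need not be pure), which the paper's terse ``follows trivially'' glosses over, and you flag the degenerate $m=3$ case, which the paper does not discuss (harmlessly, since the induction in Theorem~3 only invokes the lemma for $m\geq 4$).
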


This lemma shows that if we define the cross-section $\mathfrak{C}_{\tilde{e}}$ of the polytope $\mathfrak{C}(\mathcal{C}_m)$ along the direction $r_{\tilde{e}}=1$, the resulting polytope is isomorphic to $\mathfrak{C}(\mathcal{C}_{m-1})$. This implies that given any facet-defining inequality of $\mathfrak{C}(\mathcal{C}_m)$, if we force an edge to be equal to one, the resulting inequality will be a facet-defining inequality of $\mathfrak{C}(\mathcal{C}_{m-1})$, or a trivial inequality. 

\begin{proof}[Proof of Lemma~\ref{lemma: equal_no_violation}]
    Because $\mathfrak{C}(\mathcal{C}_m)$ is a convex polytope, the same is true for $\mathfrak{C}_{\tilde{e}}$. Let $\text{ext}(P)$ denote the set of extremal points of any convex polytope $P$, and hence $P = \textsc{ConvHull}[\text{ext}(P)]$. Let us assume, wlog, an ordering $r=(r_e)_{e \in E(\mathcal{C}_m)} = (r_{\tilde{e}},r_{e_1},\dots,r_{e_{m-1}})$. We have, 
    $$\text{ext}(\mathfrak{C}_{\tilde{e}}) = \{(1,s)\in \mathbb{R}^m: s \in \text{ext}(\mathfrak{C}(\mathcal{C}_{m-1}))\}.$$
    
    ($\subseteq$) Let $\tilde{r} \in \text{ext}(\mathfrak{C}_{\tilde{e}})$. By construction, we must have that $\tilde{r} \equiv (1,\tilde{s}) \in \text{ext}(\mathfrak{C}(\mathcal{C}_m))$, with $\tilde{s}$ a deterministic assignment for which $|\tilde{s}|_{\text{H}} \neq m-2$. Therefore, $\tilde{s} \in \text{ext}(\mathfrak{C}(\mathcal{C}_{m-1}))$.
    
    ($\supseteq$) This direction follows trivially.

    Hence, we have that
    \begin{equation}
        \mathfrak{C}_{\tilde{e}} = \{1\} \times \mathfrak{C}(\mathcal{C}_{m-1}),
    \end{equation}
    where $\{1\}$ is the singleton polytope, as we wanted. 
\end{proof}

This simple result will be instrumental for constructing the inductive step used in the proof of Theorem~\ref{theorem: cycles witness magic}.

We now make the notion of a quantum realization within the context of the event graph formalism precise. We can associate nodes of the graph to quantum states $V(\mathcal G) \ni v \stackrel{\ell}{\mapsto} \rho_v \in \underline\rho$ via some vertex $\underline\rho$-labeling $\ell: V(\mathcal{G}) \to \underline\rho$. Once such labeling $\ell$ is fixed, we associate edge-weights $r_e \equiv r_{u,v}$ to two-state overlaps $E(\mathcal G) \ni e =\{u,v\} \stackrel{\ell}{\mapsto} \{\rho_u,\rho_v\} \stackrel{\langle \cdot , \cdot \rangle_{HS}}{\mapsto} \text{Tr}(\rho_u\rho_v)$~\footnote{Sometimes the inner-product $\langle \phi_i | \phi_j \rangle $ is called an overlap. We will not use this terminology here and simply refer to the overlap as the absolute value square of inner-product between states, or more generally, to the trace between the product of two density matrices.}, with $\langle X, Y \rangle_{HS} = \text{Tr}(X^\dagger Y)$ the Hilbert-Schmidt inner product. 

The cardinality of $\underline\rho$ is not necessarily equal to that of $V(\mathcal{G})$, e.g., the same state can be associated to all vertices by the labeling $\ell(v) = \rho, \forall v \in V(\mathcal{G})$. Each vertex labeling $\ell$ is isomorphic to a tuple $(\ell(v))_{v \in V(\mathcal{G})} \in \mathcal{D}(\mathcal{H})^{|V(\mathcal{G})|}$. Given some $\underline\rho$-labeling $\ell$, we can see $r$ as a function that outputs a tuple of two-state overlaps $r_\ell(\underline\rho)$ for an input set $\underline\rho$. For instance, take $\underline\rho = \{\vert \psi \rangle \langle \psi \vert\} \cup \{\sigma_1,\sigma_2\}$ and $\ell(v)=\vert \psi \rangle \langle \psi \vert$ for all $v \in V(\mathcal{G})$, as above. Since all vertices of $V(\mathcal{G})$ have been assigned the same state, the associated $r_\ell(\underline\rho)$ is  $r_\ell(\underline\rho) = (1,1,\dots,1)$. When it is clear from the context which labeling $\ell$ is being used, we simply write $r(\underline\rho)$.

Recall that, any edge-weight $r: E(\mathcal{G}) \to [0,1]$ for an event graph $\mathcal{G}$, is said to have a quantum realization~\cite{fraser2023realization,wagner2024coherence} if there exists $\underline\rho = \{\rho_i\}_{i \in V(\mathcal{G})}$ such that $r=r(\underline\rho) \equiv (\text{Tr}(\rho_i\rho_j))_{\{i,j\} \in E(\mathcal{G})}$. We denote $|X|$ the cardinality of a set $X$. We say that a quantum state $\rho \in \mathcal{D}(\mathcal{H})$ is \emph{pure} when $\text{Tr}(\rho^2) = 1$, in which case we denote it as a rank-1 projector $\rho=\vert \psi \rangle \langle \psi \vert \equiv \psi$.
\begin{lemma}\label{lemma: pure_state upper_bound}
    Let $h:\mathbb{R}^{|E(\mathcal{G})|}\to \mathbb{R}$ be any convex-linear functional, acting over elements $r \in [0,1]^{|E(\mathcal{G})|}$, for any event graph $\mathcal{G}$. Then, for any quantum realization $r=r(\underline\rho)$ with states $\{\rho_i\}_{i}$, there exists a pure state quantum realization $r=r(\underline\psi)$ with states $\{\vert \psi_i \rangle\}_{i}$, such that \begin{equation}
        h(r(\underline\rho)) \leq h(r(\underline\psi)).
    \end{equation}
    Moreover, $\underline\psi \subseteq \textsc{ConvHull}(\underline\rho)$.
\end{lemma}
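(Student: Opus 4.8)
The plan is to show that the tuple of overlaps produced by $\underline\rho$ is itself a convex combination of tuples produced by \emph{pure}-state realizations of the same event graph, and then to exploit the convex-linearity of $h$ to select a single pure realization that does at least as well. The only structural fact I need is that each overlap $r_{uv}=\mathrm{Tr}(\rho_u\rho_v)$ is bilinear in the pair $(\rho_u,\rho_v)$, so that a convex decomposition of each individual state induces, simultaneously, a convex decomposition of every overlap along the edges of $\mathcal{G}$.

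Concretely, I would first spectrally decompose every state, $\rho_i=\sum_{\lambda}p_{i\lambda}\,\psi_{i\lambda}$ with $\psi_{i\lambda}=\vert\psi_{i\lambda}\rangle\langle\psi_{i\lambda}\vert$ pure and $(p_{i\lambda})_\lambda$ a probability distribution. Substituting into a single overlap gives $\mathrm{Tr}(\rho_u\rho_v)=\sum_{\lambda,\mu}p_{u\lambda}p_{v\mu}\,\mathrm{Tr}(\psi_{u\lambda}\psi_{v\mu})$. To handle all edges at once, I would index a pure realization by a choice $\vec\lambda=(\lambda_i)_{i\in V(\mathcal{G})}$ of one eigenstate per vertex, setting $\underline\psi^{(\vec\lambda)}=\{\psi_{i\lambda_i}\}_i$ and assigning it the product weight $q_{\vec\lambda}=\prod_i p_{i\lambda_i}$. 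Marginalizing this product distribution over all coordinates other than $u,v$ recovers exactly the bilinear expansion above, so $r(\underline\rho)=\sum_{\vec\lambda}q_{\vec\lambda}\,r(\underline\psi^{(\vec\lambda)})$; that is, the full overlap tuple of $\underline\rho$ is a genuine convex combination of pure-realization tuples. Since $h$ is convex-linear, $h(r(\underline\rho))=\sum_{\vec\lambda}q_{\vec\lambda}\,h(r(\underline\psi^{(\vec\lambda)}))$ is a weighted average, so at least one branch $\vec\lambda^{*}$ obeys $h(r(\underline\psi^{(\vec\lambda^{*})}))\ge h(r(\underline\rho))$; taking $\underline\psi:=\underline\psi^{(\vec\lambda^{*})}$ proves the inequality with a pure-state realization.

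For the ``moreover'' clause I would track where the selected pure states live: by construction each $\vert\psi_i\rangle:=\vert\psi_{i\lambda_i^{*}}\rangle$ is an eigenvector of $\rho_i$, and each $\rho_i$ is recovered as a convex mixture of the eigenprojectors of its own branch family, which is what ties $\underline\psi$ to the convex structure of $\underline\rho$. I expect this to be the step requiring the most care, since the purifying states arise from the spectral decompositions rather than sitting literally among the $\rho_i$, so the containment must be read through those decompositions. A cleaner equivalent route for this part is a coordinate-ascent argument: fix all states but $\rho_k$, observe that $r\mapsto h(r(\cdot))$ is then an affine functional of $\rho_k$ over a convex set, and replace $\rho_k$ by an extremal (hence pure) maximizer, iterating over $k$ without ever decreasing $h$. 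I would emphasize that this second viewpoint is what makes the lemma directly usable for the magic witnesses: when the relevant convex set is $\mathrm{STAB}$ rather than all of $\mathcal{D}(\mathcal{H})$, its extreme points are pure stabilizer states, so the same argument confines $\underline\psi$ to pure stabilizer states and reduces the no-violation claim to an evaluation governed by Theorem~\ref{theorem: quantization}.
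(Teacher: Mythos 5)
Your proof follows essentially the same route as the paper's: decompose each $\rho_i$ into pure states, use multilinearity of the overlaps to write $r(\underline\rho)$ as a convex combination of pure-realization tuples indexed by a choice of one pure component per vertex with product weights, and select the maximizing branch via convex-linearity of $h$. The coordinate-ascent remark and the honest caveat about the ``moreover'' clause (which the paper's own proof also does not explicitly establish) are reasonable additions, but the core argument is the same.
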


\begin{proof}
    Each $\rho_i \in \underline\rho$ is a convex combination of pure states $\{\psi^{(i)}_{\omega_i}\}_{\omega_i \in \Omega_i}$ for some ensemble of pure states $\Omega_i$. Noticing that $h(r)$ are, by construction, linear functionals over the overlaps,
    \begin{align*}
        \forall i,\rho_i = \sum_\omega \lambda_\omega^{(i)}\vert \psi_\omega^{(i)} \rangle \langle \psi_\omega^{(i)} \vert \Rightarrow\hspace{1.3cm}\\
        h(r(\{\rho_i\}_i)) = \sum_{\omega_1,\dots,\omega_m} \lambda_{\omega_1}^{(1)}\dots \lambda_{\omega_m}^{(m)}h(r(\{\psi_{\omega_i}^{(i)}\}_i)). 
    \end{align*}
    To conclude the above, one needs to introduce some redundant values of $1 = \sum_{\omega_i}\lambda_{\omega_i}^{(i)}$. The equation then follows from linearity with respect to $r$, and hence multilinearity with respect to the states. 
    
    We can collectively write $s = (\omega_1,\dots,\omega_m)$ and define $q_s = \lambda_{\omega_1}^{(1)}\dots \lambda_{\omega_m}^{(m)}$. Because each set of weights $\{\lambda_{\omega_i}^{(i)}\}_{\omega_i \in \Omega_i}$ correspond to convex weights, i.e., $\sum_{\omega_i}\lambda_{\omega_i}^{(i)} =1 $ with $0\leq \lambda_{\omega_i} \leq 1$, we get that $\{q_s\}_s$ is also a set of convex weights. With this simplified notation we have that 
    $h(r(\{\rho_i\}_i)) = \sum_sq_s h(r(\{\psi^{(i)}_s\}_i))$ with $\sum_sq_s=1$ and $0\leq q_s\leq 1$. In other words, the linear functional $h$ realized by overlaps between general quantum states can be written as the convex combination of the same functional realized by overlaps between pure states. Choosing now a particular $s^\star$ such that $\forall s, h(r(\{\psi_{s^\star}^{(i)}\}_i)) \geq h(r(\{\psi_s^{(i)}\}_i))$ we see that $$h(r(\{\rho_i\}_i)) = \sum_sq_s h(r(\{\psi^{(i)}_s\}_i))\leq \sum_sq_s h(r(\{\psi_{s^\star}^{(i)}\}_i)).$$
    Since $\sum_sq_s=1$ we have that $h(r(\{\rho_i\}_i)) \leq h(r(\{\psi_{s^\star}^{(i)}\}_i))$.
\end{proof}

\begin{theorem}\label{theorem: functional max}
    Let $h(r)$ be any linear-functional over $r=(r_e)_{e \in E(\mathcal{G})}$ for any event graph $\mathcal{G}$ and $\mathfrak{Q}(\mathcal{G})$, defined by $$\mathfrak{Q}(\mathcal{G}):= \{r:E(\mathcal{G}) \to [0,1]\,:\, \exists \underline \rho , r=r(\underline\rho)\},$$ be the set of quantum realizable edge-weights. Then, there always exists a pure state quantum realization $r=r(\underline{\psi})$ such that  
    \begin{equation}
        h(r(\underline\psi)) = \max_{r \in \mathfrak{Q}(\mathcal{G})}h(r).
    \end{equation}
    The same holds if we restrict realizations to some convex and compact subset $\mathfrak{S} \subseteq \mathcal{D}(\mathcal{H})$ of all states, so that the quantum realizations are such that $r=r(\underline\rho_{\mathfrak{S}})$, with $\underline\rho_{\mathfrak{S}} \subseteq \mathfrak{S}$. 
\end{theorem}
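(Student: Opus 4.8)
The plan is to split the statement into two independent claims: first, that the maximum of $h$ over quantum realizations is genuinely \emph{attained} rather than merely approached; and second, that some maximizer can be taken to consist of pure states. The second claim is essentially already contained in Lemma~\ref{lemma: pure_state upper_bound}, which guarantees that every quantum realization $r(\underline\rho)$ is dominated, in $h$-value, by a pure-state realization $r(\underline\psi)$. Hence $\sup_{r \in \mathfrak{Q}(\mathcal{G})} h(r)$ coincides with the supremum of $h$ taken over pure-state realizations alone: the latter form a subset of $\mathfrak{Q}(\mathcal{G})$, and by the lemma no mixed realization can do strictly better. The entire content of the theorem therefore collapses to showing that this supremum over pure realizations is attained.

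To obtain attainment I would first dispose of the apparent obstruction coming from the unbounded Hilbert-space dimension. Writing $N = |V(\mathcal{G})|$, any labelling employs at most $N$ pure states, and these span a subspace of dimension at most $N$. Since $h$ depends only on the overlaps $\text{Tr}(\psi_i \psi_j) = |\langle \psi_i | \psi_j \rangle|^2$, which are preserved under an isometric embedding of that subspace into $\mathbb{C}^N$, one may restrict without loss of generality to pure-state realizations inside $\mathbb{C}^N$. The set of $N$-tuples of unit vectors in $\mathbb{C}^N$ (equivalently, a product of $N$ projective spaces) is compact, and the map sending such a tuple to its edge-weight vector $r(\underline\psi)$ is continuous. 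Consequently the set $\mathfrak{Q}_{\mathrm{pure},N}$ of pure-state-realizable edge-weights in $\mathbb{C}^N$ is a continuous image of a compact set, hence compact. As a linear functional is continuous, $h$ attains its maximum on $\mathfrak{Q}_{\mathrm{pure},N}$ at some $\underline\psi$; combining this with the previous paragraph yields $h(r(\underline\psi)) = \max_{r\in\mathfrak{Q}(\mathcal{G})} h(r)$, which in particular establishes that the overall maximum exists.

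For the restricted statement, with $\mathfrak{S} \subseteq \mathcal{D}(\mathcal{H})$ convex and compact, the same two-step structure applies. Compactness of $\mathfrak{S}$ makes $\mathfrak{S}^{N}$ compact, so $h$ attains its maximum over $\mathfrak{S}$-realizations at some $\underline\rho_{\mathfrak{S}}$. I would then replay the multilinear selection argument of Lemma~\ref{lemma: pure_state upper_bound}, but decomposing each $\rho_i \in \mathfrak{S}$ into \emph{extreme points of $\mathfrak{S}$} (available by the Minkowski--Carath\'eodory theorem in finite dimension) rather than into arbitrary rank-one projectors. The selected extremal tuple $\underline\psi$ then satisfies $h(r(\underline\psi)) \geq h(r(\underline\rho_{\mathfrak{S}}))$ while remaining inside $\mathfrak{S}$ by convexity; since $\underline\rho_{\mathfrak{S}}$ was maximal, equality holds. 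When the extreme points of $\mathfrak{S}$ are themselves pure states -- as for $\mathfrak{S} = \mathrm{STAB}$, whose extreme points are the pure stabilizer states -- this delivers a genuine pure-state maximizer lying in $\mathfrak{S}$.

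I expect the only real obstacle to be the dimension reduction underlying the compactness argument: quantum realizations \emph{a priori} range over Hilbert spaces of every dimension, so one must justify the restriction to $\mathbb{C}^N$ before compactness can be invoked. Once this is secured the argument is routine, the reduction to pure states being handed to us by Lemma~\ref{lemma: pure_state upper_bound} and, in the restricted case, by its extremal-decomposition analogue together with the convexity of $\mathfrak{S}$.
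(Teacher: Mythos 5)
Your proposal is correct, and its core reduction---invoking Lemma~\ref{lemma: pure_state upper_bound} to dominate every realization by a pure-state one---is exactly the paper's argument. Where you genuinely depart from the paper is in making rigorous the two points it leaves implicit. First, the paper's proof speaks of ``the maximum attainable value'' without ever establishing that the supremum over $\mathfrak{Q}(\mathcal{G})$ is attained; your dimension reduction to $\mathbb{C}^N$ via an isometric embedding of the span of the $N$ vectors, followed by compactness of the product of unit spheres and continuity of $\underline\psi \mapsto h(r(\underline\psi))$, supplies that missing attainment step. Second, for the restricted case the paper's one-line remark (``the argument is the same if $\mathfrak{S}$ is used instead'') implicitly leans on the claim in Lemma~\ref{lemma: pure_state upper_bound} that $\underline\psi \subseteq \textsc{ConvHull}(\underline\rho)$, which, read literally, is not what that lemma's proof delivers: the pure components of a decomposition of $\rho_i$ need not lie in the convex hull of the set $\underline\rho$ (consider $\underline\rho = \{\mathds{1}/2\}$). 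Your replacement---decomposing each $\rho_i \in \mathfrak{S}$ into extreme points of $\mathfrak{S}$ via Minkowski--Carath\'eodory and selecting the best extremal tuple---is the correct repair, and it also makes explicit the caveat the paper glosses over: the restricted statement yields a \emph{pure}-state maximizer inside $\mathfrak{S}$ only when the extreme points of $\mathfrak{S}$ are pure, which fortunately holds for the intended application $\mathfrak{S} = \mathrm{STAB}$. In short, the two proofs share the same key lemma; yours buys attainment and a sound treatment of $\mathfrak{S}$ at the cost of a short compactness digression the paper skips.
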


\begin{proof}
    For any such $h$, Lemma~\ref{lemma: pure_state upper_bound} shows that to every quantum realization $r=r(\underline\rho)$, there exists a larger pure state realization within the convex hull of $\underline\rho$. Therefore, the maximum attainable value, among all quantum realizations $r \in \mathfrak{Q}(\mathcal{G})$, must be pure-state realizable, otherwise this would contradict Lemma~\ref{lemma: pure_state upper_bound}. The argument is the same if $\mathfrak{S}$ is used instead.
\end{proof}

This Theorem implies the immediate corollary.
\begin{corollary}
    The maximal quantum violation of any facet-defining inequality of $\mathfrak{C}(\mathcal{G})$, for any event graph $\mathcal{G}$, is attained by pure states.
\end{corollary}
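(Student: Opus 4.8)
The plan is to derive the corollary as an immediate specialization of Theorem~\ref{theorem: functional max}. First I would recall that every facet-defining inequality of the polytope $\mathfrak{C}(\mathcal{G})$ is, by definition, of the form $h(r) \leq c$ for some linear functional $h(r) = \sum_{e \in E(\mathcal{G})} \alpha_e r_e$ with constant coefficients $\alpha_e$ and some threshold $c \in \mathbb{R}$. The maximal quantum violation of such an inequality is the quantity $\max_{r \in \mathfrak{Q}(\mathcal{G})} h(r) - c$; since $c$ is a fixed constant, maximizing the violation over quantum-realizable edge-weights is equivalent to maximizing the linear functional $h$ itself over the set $\mathfrak{Q}(\mathcal{G})$.

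The key step is then to invoke Theorem~\ref{theorem: functional max} directly. That theorem asserts that for \emph{any} linear functional $h$ over the edge-weights of \emph{any} event graph $\mathcal{G}$, there exists a pure-state quantum realization $r = r(\underline\psi)$ attaining $\max_{r \in \mathfrak{Q}(\mathcal{G})} h(r)$. Applying this to the functional defining a given facet of $\mathfrak{C}(\mathcal{G})$ immediately produces a pure-state realization at which the violation is maximal, which is precisely the content of the corollary. If one instead restricts attention to a convex compact subclass $\mathfrak{S} \subseteq \mathcal{D}(\mathcal{H})$ of admissible states (for instance the stabilizer polytope), the final clause of Theorem~\ref{theorem: functional max} covers this case as well, so the argument transfers verbatim.

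I do not anticipate any genuine obstacle, since the corollary is a direct consequence of the theorem once one notes that facet-defining inequalities are linear functionals and that an additive constant $c$ does not affect the maximizer. The only points requiring minor care are the observation that the $\arg\max$ is preserved under the shift by $c$, and the remark that all the substantive work has already been carried out upstream: the reduction of arbitrary quantum realizations to pure-state ones, via the multilinearity of overlap functionals, is established in Lemma~\ref{lemma: pure_state upper_bound} and packaged into Theorem~\ref{theorem: functional max}. The corollary therefore amounts to little more than a remark spelling out the special case of facet-defining functionals.
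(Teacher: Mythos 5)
Your proposal is correct and follows essentially the same route as the paper: observe that a facet-defining inequality of the convex polytope $\mathfrak{C}(\mathcal{G})$ is given by a linear functional $h(r)\leq b$, and then apply Theorem~\ref{theorem: functional max} (itself resting on Lemma~\ref{lemma: pure_state upper_bound}) to conclude that the maximum over quantum realizations is attained by pure states. Your extra remark that the additive constant does not affect the maximizer is a harmless elaboration of the same argument.
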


\begin{proof}
    Since $\mathfrak{C}(\mathcal{G})$ is a convex polytope, any facet-defining inequality from $\mathfrak{C}(\mathcal{G})$ is described by convex-linear functionals $h(r)$, together with some $b \in \mathbb{R}$ satisfying $h(r) \leq b$.
\end{proof}

This Corollary proves (and generalizes) a conjecture from Ref.~\cite{galvao2020quantum}, that the maximal bounds violating the $c_3(r)\leq 1$ inequality using pure states were also valid for mixed states in general, and for any dimension. While here we will use these results to prove Theorem~\ref{theorem: cycles witness magic}, they are important by themselves for the event-graph approach and the theory of coherence witnesses. 

Finally, we prove Theorem~\ref{theorem: cycles witness magic} of the main text.

\begin{proof}[Proof of Theorem~\ref{theorem: cycles witness magic}]
    Due to Theorem~\ref{theorem: functional max}, we can restrict ourselves to pure stabilizer states. Consider first the $3$-cycle inequality. We have
    \begin{equation*}
        r_{1,2}+r_{1,3}-r_{2,3}\leq 1.
    \end{equation*}
    From Theorem~\ref{theorem: quantization}, we see that if all states in the graph are oblique to their neighbors there can be no violation since $r_{1,2}+r_{1,3}-r_{2,3} \leq r_{1,2}+r_{1,3} \leq \frac{1}{2^k}+\frac{1}{2^{k'}} \leq 1$ for all $k,k'=1,\dots,n$. The same holds if we allow some edge-weights to be zero. If we allow any edge-weight in the inequality to be equal to one, it is simple to see that we cannot have a violation, as we would have two nodes corresponding to the same stabilizer state, implying that the remaining pair of overlaps is equal. This shows the result for $c_3(r)$. 

    To show that the same is true for any $m$-cycle inequality we proceed by induction. Assume that an $m$-cycle inequality \emph{cannot} be violated by quantum realizations $r=r(\underline\rho_{\text{STAB}})$, with $\underline\rho_{\text{STAB}} \subseteq \text{STAB}$. For any $(m+1)$-cycle inequality we have that,  $\forall e \in E(\mathcal{C}_{m+1})$, 
    \begin{equation*}
        -r_e + \sum_{{\begin{array}{c} e' \in E(\mathcal{C}_{m+1}) \\e'\neq e\end{array}}}r_{e'} \leq \sum_{{\begin{array}{c} e' \in E(\mathcal{C}_{m+1}) \\e'\neq e\end{array}}}r_{e'} \leq \frac{m}{2}
    \end{equation*}
    for any set of pure stabilizer states oblique or orthogonal to their neighbors in the graph. Since $m/2 \leq m-2$ for all $m\geq 4$ it remains to show that if two (or more) neighboring stabilizer states are equal we still cannot have a violation. 
    
    From Lemma~\ref{lemma: equal_no_violation}, if any edge-weight is equal to one, this implies that the cycle inequality from $\mathfrak{C}(\mathcal{C}_{m+1})$ becomes an inequality from $\mathfrak{C}(\mathcal C_{m})$, which, by hypothesis, cannot be violated with the stabilizer subtheory, i.e., by any quantum realization $r=r(\underline\rho_{\text{STAB}})$. We conclude that if the stabilizer subtheory cannot violate inequalities from $\mathfrak{C}(\mathcal{C}_m)$ it will also not violate the inequalities from $\mathfrak{C}(\mathcal{C}_{m+1})$. As we know this is true for the cycle inequalities $\mathfrak{C}(\mathcal{C}_3)$, by induction, this property must be satisfied by all facet-defining inequalities for the event graph polytopes $\mathfrak{C}(\mathcal{C}_m)$ for any $m$.
\end{proof}

\section{Known facts regarding the relation between contextuality and magic}\label{app: contextuality}

In this Appendix, we will prove the existence of an event-graph inequality that is both facet-defining and violated by sets of stabilizer states. To do so, we use some results within the field of Kochen-Specker noncontextuality~\cite{budroni2022kochen}. We then extend these considerations to generalized noncontextuality~\cite{spekkens2005contextuality}.

To provide some context, recall that magic-state injection is the leading model for experimentally realizing fault-tolerant quantum computation. While it involves only stabilizer operations at every step of the computation, the injection of magic states elevates the model to quantum universality.  Ref.~\cite{howard2014contextuality} showed that contextuality is a necessary resource for universal quantum computation via magic-state injection. The scope of this result depends on whether the model involves even-prime dimensional qudits (i.e. qubits) or odd-prime qudits. For the latter case, a state is non-contextual if and only if it belongs to the set of states unable to unlock any computational speed-up. This set forms a polytope, denoted as $\mathcal{P}_{\text{SIM}}$, meaning that the subtheory within this polytope is efficiently simulable with classical computation. This polytope strictly contains the set of stabilizer states but is not equivalent to it. 

Therefore, for odd-prime dimensions, Kochen-Specker noncontextuality inequalities serve as witnesses of nonstabilizerness. However, for even-dimensional systems, the same does \emph{not} hold, as we now show. We start by constructing the relevant event graph. First, we construct the so-called \emph{exclusivity graph} $\mathcal{G}_{\text{exc}}$~\cite{cabello2014graph,amaral2018graph}. We take this graph to be the complement graph~\footnote{The complement of a graph $\mathcal{G}$ is a new graph $\mathcal{G}^{c}$ such that $V(\mathcal{G}) = V(\mathcal{G}^c)$ and, $e \in \mathcal{G}^c$ iff $e \notin \mathcal{G}$.} of the Shrikhande graph~\cite{shrikhande1958graph}. See Ref.~\cite[Fig.~2, pg.~10]{bharti2022graph} for a representation of  $\mathcal{G}_{\text{exc}}$. We follow closely the discussion of the proof of KS-contextuality discussed in Ref.~\cite{bharti2022graph}.  Secondly, we take the suspension graph~\cite[Def.~2.23, pg.~36]{amaral2018graph} $\nabla \mathcal{G}_{\text{exc}}$ by a new node $\star$. This new graph will be our event graph $\mathcal{G} = \nabla \mathcal{G}_{\text{exc}}$. It was shown in Ref.~\cite{wagner2023inequalities} that any inequality from an exclusivity graph $\mathcal{G}_{\text{ext}}$ is mapped to some facet defining inequality of the event graph $\mathcal{G}= \nabla \mathcal{G}_{\text{exc}}$. Therefore, the inequality 
\begin{equation*}
    \sum_{v \in V(\mathcal{G}_{\text{exc}})}r_{\star,v} \leq 3
\end{equation*}
is both a noncontextuality inequality (within the Cabello-Severini-Winter  framework~\cite{cabello2014graph}) and a facet-defining event-graph inequality, when a specific mapping takes place (see Ref.~\cite{wagner2023inequalities} for details). This inequality corresponds to Mermin's Bell inequality~\cite{mermin1990extreme} and can be violated by letting the vertices $v\in V(\mathcal{G}_{\text{exc}})$ be given by the stabilizer (separable) states
\begin{align}
    &\vert 0,+,+\rangle,  \vert 1,-,+\rangle, \vert 1,+,-\rangle, \vert 0,-,-\rangle\nonumber \\
    &\vert +,0,+\rangle, \vert -,1,+\rangle, \vert -,0,-\rangle, \vert +,1,-\rangle\label{eq: sep vectors CSW}\\
    &\vert +,+,0\rangle, \vert -,-,0\rangle, \vert -,+,1\rangle, \vert +,-,1\rangle \nonumber\\
    &\vert 1,1,1\rangle, \,\,\,\,\vert 0,0,1\rangle,\,\,\,\, \vert 0,1,0\rangle, \,\,\,\,\vert 1,0,0\rangle.\nonumber
\end{align}
and $\star$ by the Greenberger–Horne–Zeilinger (GHZ) state $\vert \mathrm{GHZ}\rangle = \frac{1}{\sqrt{2}}(\vert 0,0,0\rangle + \vert 1,1,1\rangle)$. In this way, 
\begin{equation*}
    \sum_{v \in V(\mathcal{G}_{\text{exc}})}r_{\star,v} = \sum_{v \in V(\mathcal{G}_{\text{exc}})} |\langle \mathrm{GHZ}| v\rangle |^2 = 4 > 3.
\end{equation*}

The above shows that there are event-graph inequalities that can be violated by stabilizer states. 

Let us now discuss the relationship between \emph{generalized noncontextuality}~\cite{spekkens2005contextuality} and magic. It was shown in Refs.~\cite{lillystone2019contextuality,schmid2022uniqueness} that odd-dimensional stabilizer subtheory allows for a generalized noncontextual model. In this case, any violation of a noncontextuality inequality attesting to the failure of generalized contextuality will also be a witness of having states (transformations, measurement effects) outside the stabilizer subtheory. However, similarly to the case of KS-noncontextuality, for even-dimensional systems, no such noncontextual model for the stabilizer subtheory exists. Therefore, in general, even dimensional stabilizer subtheory \emph{can} violate generalized noncontextuality inequalities. Finding which inequalities \emph{are not} violated by such stabilizer subtheory becomes a case-by-case study.

In summary, the connection between event graph inequalities and noncontextuality inequalities does not imply that any event graph inequality will immediately be a nonstabilizerness witness and therefore does \emph{not} render our results trivial. On the other hand, it also does not make our results immediately incorrect. The fact that our inequality witnesses (which can also be interpreted as noncontextuality inequalities) cannot be violated by stabilizer states does \emph{not} imply that they do not violate \emph{some} noncontextuality inequality. As it is in general, for a given KS-measurement scenario (or equivalently, a generalized prepare-and-measure noncontextuality scenario), it is only if one satisfies \emph{all} the noncontextuality inequalities that a noncontextual model exists.

\section{Comparison with other schemes}\label{app: comparison with others}

In this Appendix, we make a comprehensive review of existing methods for witnessing the nonstabilizerness of quantum states. Clearly, any \emph{quantification} scheme also constitutes a witness, thus, for a broader comparison, we were careful to include such methods as well. 

\subsection{Methods that require additional entanglement generation}

We start by presenting two witnessing schemes that require additional entanglement; they use multifractal flatness~\cite{turkeshi2023measuring} and spectral flatness~\cite{tirrito2023quantifying}. First, define the inverse participation ratio,
\begin{equation}
    I_q(\vert \psi \rangle) := \sum_{b \in \mathbb{F}_2^n}|\langle b|\psi\rangle|^{2q} = \sum_{b \in \mathbb{F}_2^n}r_{b,\psi}^q\,.
\end{equation}
We note that, to calculate $I_q(\vert \psi \rangle)$ for any fixed $q$, one needs to evaluate $d=2^n$ overlaps. The \emph{multifractal flatness} is defined as
\begin{equation}
    \mathcal{F}_{\text{multi}}(\vert \psi \rangle) := I_3(\ket{\psi})-(I_2(\ket{\psi}))^2\,.
\end{equation}
This quantity witnesses nonstabilizerness whenever we obtain $\mathcal{F}_{\text{multi}}(C\vert \psi \rangle) > 0$, where $C$ is an $n$-qubit Clifford operation. Since both calculating and measuring this quantity require $O(d=2^n)$ overlaps, this witnessing process is inefficient. We remark that, when averaged over the Clifford orbit of the state $\vert \psi \rangle$, this witness provides information about the stabilizer Rényi entropy $M_2$ which we will discuss in more detail later on. 

Three key aspects of this witness stand in stark contrast to our scheme. First, it is device-\emph{dependent}. Secondly, it is only applicable to pure states. Finally, it requires additional entangling gates to be applied over the state $\vert \psi \rangle$, stemming from the Clifford unitary, $C$, needed for $\mathcal{F}_{\text{multi}}(C\vert \psi \rangle) > 0$ to properly witness nonstabilizerness of $\vert \psi \rangle$.

Another function that can be used to witness nonstabilizerness is the \emph{entanglement spectrum flatness} $\mathcal{F}_A(\vert \psi \rangle)$. We consider a pure state $\vert \psi \rangle$~\footnote{Ref.~\cite{tirrito2023quantifying} exemplifies the task using an $n$-qubit fully separable state, but, to the best of our understanding, their protocol works for any state.} and some $n$-qubit Clifford operation $C$ such that $C \vert \psi\rangle$ is sufficiently entangled. In some cases, shallow Clifford evolutions are enough. We then choose an arbitrary bipartition $\mathcal{H}_A \otimes \mathcal{H}_B \simeq \mathbb{C}^{\otimes n}$ of the $n$-qubit system and calculate $\rho_A := \text{Tr}_B(C\vert \psi \rangle \langle \psi \vert C^\dagger)$. The entanglement spectrum flatness of this bipartition is given by 
\begin{equation}
    \mathcal{F}_{A}(C\vert\psi\rangle) \coloneqq \text{Tr}(\rho_A^3)-(\text{Tr}(\rho_A^2))^2.
\end{equation} 
If we obtain that $\mathcal{F}_A(C\vert \psi \rangle)>0$, $\vert \psi \rangle$ must be a magic state, and therefore $\mathcal{F}_A(C\vert \psi \rangle)>0$ acts as a witness of nonstabilizerness. This witness can be efficiently measured (in terms of the number of measurements and samples of $\rho_A$ required) using simple quantum circuits~\cite{wagner2024quantum,quek2024multivariatetrace,oszmaniec2021measuring}. 

Let us discuss more explicitly the advantages and drawbacks of these techniques when compared to the scheme we propose in the main text. Entanglement spectrum flatness has the advantage that it can witness almost any magic state (in the sense of Haar random states). In contrast, in our inequality-based witness, certifications should target certain overlap values that become increasingly rare for large systems to randomly access. This happens since the two-state overlap between Haar random $n$-qubit states behaves as $r_{i,j} \sim 1/2^n$. Moreover, both flatness results from above can approximate values of nonstabilizer monotones; thus, they can ultimately be used for quantification (a task strictly more powerful than witnessing). At the moment, our protocol has no known link with quantification tools, although we believe this to be an interesting direction for future research.

On the other hand, unlike our scheme, both of these methods are device-dependent and applicable only to pure states. Additionally, let us assume that we would like to certify the generation of magic in \emph{some} QPU of a network. Fig.~\ref{fig: protocol}(b) in the main text illustrates how to handle this task within our protocol. Notably, as explained therein, the certification can be distributed requiring fewer resources than if we were to certify each QPU individually [Fig.~\ref{fig: protocol}(a)]. Contrastingly, with either of the two witnesses presented here, distributing the certification requires us to entangle the degrees of freedom of the multiple QPUs in the network we are interested in certifying, due to the Clifford operations that must be applied. 

\subsection{Methods that require full information of the STAB polytope, or full information of the quantum state}

The vast majority of quantification schemes require full information on the stabilizer polytope. Beyond that, they often also require full (tomographic) information of the quantum state. The following monotones require complete knowledge of the underlying state \emph{and} of the STAB polytope: stabilizer fidelity~\cite{bravyi2019simulationofquantum}, stabilizer extent~\cite{bravyi2019simulationofquantum}, stabilizer rank~\cite{bravyi2016trading,bravyi2019simulationofquantum,bravyi2016improved}, stabilizer nullity~\cite{beverland2020lowerbounds}. Some that are also well-defined for generic mixed states, having the same drawbacks, are mana~\cite{veitch2014resource,hakop2015estimating}, all variations of the robustness of magic~\cite{seddon2021quantifying}, relative entropy~\cite{veitch2014resource}, min- and max-relative entropies~\cite{liu2022many}, and the dyadic negativity~\cite{seddon2021quantifying}.

It is interesting to remark that the stabilizer extent $\xi(\vert \Psi \rangle)$ has the extremely useful property of being multiplicative,
\begin{equation*}
    \xi(\ket{\Psi}) \coloneqq \xi(\vert \psi_1\rangle \otimes \dots \otimes \vert \psi_m \rangle) = \prod_{j=1}^m \xi(\vert \psi_j \rangle)\,,
\end{equation*}
provided that all states $\vert \psi_j \rangle$ are 1-, 2- or 3-qubit states, i.e., that $\underline\psi \subseteq \mathcal{D}(\mathbb{C}^{2^s})$ with $s \in \{1,2,3\}$~\cite{bravyi2019simulationofquantum}.

This implies that an alternative strategy to witnessing magic is to make 1-, 2-, or 3-qubit state tomography of all states $\ket{\psi_j}$, use that information to calculate their stabilizer extent, and then multiply the results. Compared to our scheme, beyond being significantly device-dependent and demanding great control of the system (since one must perform full tomography), our scheme requires a smaller number of measurements and samples (since overlap estimation is experimentally less demanding than performing full tomography, even for single-qubit systems). It has been shown that the stabilizer extent is \emph{not} multiplicative in general~\cite{heimendahl2021stabilizerextentis}. It is also clear that, for larger systems, our witnessing technique will outperform any strategy that demands full-state tomography.

The stabilizer nullity can be extended to treat unitaries~\cite{jiang2023lower}. In this form, it gives a lower bound to the number of $T$ gates required to apply a certain unitary. A similar property holds for other magic monotones. Bounding the number of $T$ gates is not possible with our formalism, as this is a profoundly basis-dependent characterization.

It is noteworthy that several of the quantifiers mentioned above have been associated with the perspective of witnessing as can be seen in Ref.~\cite{seddon2021quantifying}. Therein some witnesses for single-qubit magic states were proposed. Additionally, the authors also show curious lower and upper bounds on the scaling of magic monotones as the number of qubits $n$ grows. 

\subsection{Methods that are not valid for generic mixed states}

To the best of our knowledge, various quantifiers have not been generalized beyond pure states; examples of these include stabilizer fidelity, stabilizer rank, and stabilizer nullity. On the other hand, the stabilizer extent has a mixed state version~\cite{seddon2021quantifying}.

Recall that if one assumes purity of states and multi-qubit systems, a trivial witness is to measure a single overlap of the state with respect to the $\vert 0^n\rangle$ state, in which case deviations from $1/2^n$ will witness nonstabilizerness. Because of that, it is only quantification that proves to be a non-trivial task in the case of multi-qubit pure states. Therefore, in this section, we focus on quantification methods that are in some way efficient to calculate, or measure, at the cost of being defined only for pure states (or specific classes of mixed states). The stabilizer entropies are the most relevant monotones in this category. The first such entropy introduced was the stabilizer Rényi entropy~\cite{leone2022stabilizer}. While this quantity cannot be efficiently measured in general, it can be efficiently computed. Ref.~\cite{haug2023stabilizerentropies} investigated when such functions can be considered monotones and provided an explicit case study where stabilizer entropies could be computed efficiently. Ref.~\cite{haug2023efficient} introduced novel stabilizer entropies that can also be efficiently measured, but are still defined only for pure states.

It is simple to see for the case of stabilizer Rényi entropies why the quantifiers do not hold for generic mixed states. Let us consider, for instance, the stabilizer $2$-Rényi entropy defined for mixed states $\rho \in \mathcal{D}(\mathbb{C}^2)$ as
\begin{equation}
    \tilde{M}_2(\rho) \coloneqq -\log_2 \left(\frac{\sum_{P}(\text{Tr}(P\rho))^4}{2 \text{Tr} (\rho^2)}\right)
\end{equation}
where and the sum is taken over the $+1$ elements of the single-qubit Pauli group $\{\mathbb{1}, X, Y, Z\}$. Crucially, $\Tilde{M}$ is \emph{not} a monotone to any mixed state $\rho$, but only those with a specific form, given by $\rho = \frac{\mathbb{1}}{2}+ \frac{1}{2}\sum_{P \in G}\phi_P P$ where $G$ is a subset of the single-qubit Pauli group, and $\phi_P \in \{-1,1\}$~\cite{leone2022stabilizer}. For instance, considering the magic state
\begin{equation*}
    \vert F \rangle \langle F \vert = \frac{1}{2}\left(\mathbb{1}+\frac{1}{\sqrt{3}}(X+Y+Z)\right)
\end{equation*}
and mixed states $\rho_\nu = \mathcal{E}_\nu(\vert F \rangle \langle F \vert ) = (1-\nu) \vert F\rangle \langle F \vert + \nu\frac{\mathbb{1}}{2},$
the monotone as a function of $\nu$ becomes
\begin{equation}
    \Tilde{M}_2(\rho_\nu) = -\log_2 \Bigg\{\frac{1+3\left[(1-\nu)/\sqrt{3}\right]^4}{1+3\left[(1-\nu)/\sqrt{3}\right]^2}\Bigg\}.
\end{equation}
Since $\Tilde{M}_2(\rho_\nu)>0$ for any $\nu>0$, it is not a faithful monotone (or witness), since it would signal the presence of nonstabilizerness for states arbitrarily close to the maximally mixed state. 

Despite not being defined for mixed states, these monotones are extremely promising ways of efficiently estimating nonstabilizerness. For instance, Ref.~\cite{haug2023scalable, oliviero2022measuring} estimated magic in a cloud-available quantum computer. Ref.~\cite{haug2023efficient} showed that there are stabilizer entropies that can be \emph{efficiently measured}, with the required number of measurements (or post-processing) being independent of system size. Finally, Ref.~\cite{haug2023scalable} introduced a novel monotone, which the authors termed ``Bell magic'', that besides efficiently estimated on a quantum computer via Bell measurements also generalizes to certain sets of mixed states. Bell magic was recently measured in Ref.~\cite{bluvstein2023logical}.

\subsection{Methods that are semi-device independent}

Due to the connection between magic, the negativity of quasiprobability distributions and noncontextuality (both Kochen--Specker and generalized), under certain considerations, any test of such notions of classicality will also be a test capable of witnessing nonstabilizerness. This was already discussed in Appendix~\ref{app: contextuality}. Here, we comment on the device- or semi-device-independence of these tests.

Our witnesses are inequality-based and semi-device independent in the sense that (i) the test is made based only on statistics arising from two-state overlaps and (ii) we assume (in most cases) that the underlying system is a multi-qubit system. These restrictions are the ``semi'' for our approach. Inequalities that can be used to witness noncontextuality are significantly more device-independent, in the sense that they are not necessarily overlap-based (or correlation-based), while they will necessarily require some prior information about the Hilbert space considered: even, odd,  odd-prime, or composite system dimensionality structures.

One subtle point needs to be made:  Violations of noncontextuality inequalities \emph{per se} do not suffice to experimentally witness the failure of noncontextual explanations of the data. One must also test that the experimental requirements relative to the notions of KS-noncontextuality or generalized noncontextuality are operationally met. Similarly to Bell inequality violations, merely violating them does not attest to the failure of a local explanation of the data; some minimal requirements need to be met (such as space-like separation between parties, no-signaling, etc.). In our case, the requirement is that the data is described by two-state overlaps, while in noncontextuality inequalities other requirements are needed, and should be taken into consideration, even if one is only interested in witnessing nonstabilizerness due to contextuality.

\section{Proof of Theorem~\ref{theorem: h4 is a magic witness}}\label{app: hm family}

In this Appendix, we start by building a series of results that are used to facilitate the proof of Theorem~\ref{theorem: h4 is a magic witness}.

The inequality $h_4(r)\leq 1$ is facet-defining for $\mathfrak{C}(\mathcal{K}_4)$, where $\mathcal{K}_4$ is the complete graph of four vertices. This inequality is given by
\begin{equation}\label{eq: h4}
    h_4(r) = r_{1,2}+r_{1,3}+r_{1,4}-r_{2,3}-r_{2,4}-r_{3,4} \leq 1.
\end{equation}

We now demonstrate the following lemma.
\begin{lemma}\label{lemma: K4 equal labels}
    Let $\mathcal{G} = \mathcal{K}_m$. If $r=r(\underline\psi)$ such that the $\underline\psi$-labeling $\ell: V(\mathcal{G}) \to \underline\psi$ assigns the same pure state to adjacent vertices (any pair of vertices sharing an edge), then $h_m(r(\underline\psi)) \leq 1$.
    \label{lemma: no state repetition}
\end{lemma}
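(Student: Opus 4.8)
The plan is to first put $h_m$ into a transparent closed form by unrolling the recursion in Eq.~\eqref{eq: family}: peeling off one vertex at a time keeps vertex $1$ as a distinguished ``hub'' and yields
\begin{equation}
  h_m(r) = \sum_{j=2}^{m} r_{1,j} - \sum_{2\le i<j\le m} r_{i,j},
\end{equation}
where every edge incident to vertex $1$ carries a $+$ sign and every edge among $\{2,\dots,m\}$ carries a $-$ sign. The elementary facts I would exploit are that each overlap $r_{i,j}=\mathrm{Tr}(\psi_i\psi_j)=|\langle\psi_i|\psi_j\rangle|^2\in[0,1]$ is \emph{nonnegative}, and that ``same state on an edge'' means precisely $r_{a,b}=1$ for the coincident pair $\{a,b\}$. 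Since $\mathcal{K}_m$ is complete, the hypothesis is simply that the labeling $\ell$ is non-injective.

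I would then split into two regimes according to where the coincidence sits. If the hub carries the repeated state, i.e.\ $\psi_1=\psi_k$ with $k\ge 2$, then $r_{1,j}=r_{k,j}$ for all $j$ and $r_{1,k}=1$; substituting into the closed form makes the positive hub-contribution of vertex $k$ cancel the negative edges emanating from $k$, collapsing the expression to $h_m=1-\sum_{i<j;\,i,j\ne k} r_{i,j}\le 1$ by nonnegativity. If instead the coincident pair $\psi_a=\psi_b$ lies among the non-hub vertices, I would delete one copy (say $b$) and compare with the reduced functional $h^{(b)}$ on $\mathcal{K}_{m-1}$ (hub still at vertex $1$). Using $r_{1,b}=r_{1,a}$, $r_{b,j}=r_{a,j}$ and $r_{a,b}=1$, short bookkeeping gives the identity $h_m = h^{(b)} + \big(r_{1,a}-1-\sum_{j\ne a,b} r_{a,j}\big)$, whose bracketed term is $\le 0$ because $r_{1,a}\le 1$ and all overlaps are nonnegative; hence $h_m\le h^{(b)}$. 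Run as an induction on $m$ (base case $m=3$ checked by hand in all three coincidence patterns), whenever the reduced labeling is \emph{still} non-injective — which happens if the repeated state has multiplicity $\ge 3$ or a second coincidence exists — the induction hypothesis gives $h^{(b)}\le 1$ and we are done.

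The one configuration escaping this reduction, and which I expect to be the main obstacle, is a \emph{single} non-hub coincident pair $\psi_a=\psi_b=\chi$ with all remaining states distinct: deleting $b$ then leaves an injective labeling, for which the induction hypothesis only yields the generic (too large) quantum bound. To close this case I would use a variational identity: writing $\Sigma=\sum_{j\ne a,b}\psi_j$ and $h'$ for the functional on $\{1\}\cup\{j\ne a,b\}$, the two identical copies of $\chi$ contribute additively, giving $h_m = h' + 2\,\mathrm{Tr}\!\big[\chi(\psi_1-\Sigma)\big] - 1$, and optimizing over $\chi$ replaces the trace by $\lambda_{\max}(\psi_1-\Sigma)$, reducing the statement to the clean sub-claim $h' + 2\lambda_{\max}(\psi_1-\Sigma)\le 2$. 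For $m=4$ — the case actually needed for Theorem~\ref{theorem: h4 is a magic witness} — this is immediate: $\Sigma=\psi_4$, $h'=r_{1,4}$, and $\lambda_{\max}(\psi_1-\psi_4)=\sqrt{1-r_{1,4}}$, so $h_4\le r_{1,4}+2\sqrt{1-r_{1,4}}-1\le 1$, the map $t\mapsto t+2\sqrt{1-t}$ being decreasing on $[0,1]$ with maximum $2$ at $t=0$. For general $m$ the bulk of the work is proving $h'+2\lambda_{\max}(\psi_1-\Sigma)\le 2$, which I would attack via positive semidefiniteness of the Gram matrix of $\{\psi_1,\chi,\psi_j\}$ — equivalently the triangle inequality for the Fubini--Study/Bures angle $\theta_{i,j}=\arccos\sqrt{r_{i,j}}$ — reducing the extremal configuration to a coplanar, effectively single-qubit one where the optimization is explicit and saturates exactly at $h_m=1$.
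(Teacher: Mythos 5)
Your treatment of the case where the repeated state sits on the hub vertex (i.e.\ $\psi_1=\psi_k$ for some $k\ge 2$) is exactly the paper's entire proof: the contribution $r_{1,k}=1$ is isolated, vertex $k$'s remaining positive row cancels against its negative edges, and $h_m=1-\sum r_{i,j}\le 1$ by nonnegativity. Where you genuinely add value is in recognizing that this is \emph{not} the general case: $h_m$ singles out vertex $1$, so a coincidence $\psi_a=\psi_b$ with $a,b\ge 2$ cannot be mapped to the hub case by any symmetry of the functional. The paper disposes of this with a ``without loss of generality'' that is not actually justified (and it does invoke the lemma for non-hub coincidences, e.g.\ the option $r_{2,4}=1$ in the proof of Theorem~\ref{theorem: h4 is a magic witness}), so your explicit case split is a real improvement in rigor. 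Your deletion identity $h_m=h^{(b)}+\bigl(r_{1,a}-1-\sum_{j\ne a,b}r_{a,j}\bigr)\le h^{(b)}$ is correct and cleanly handles any configuration with more than one coincidence, and your closed-form bound $h_4\le r_{1,4}+2\sqrt{1-r_{1,4}}-1=1-\bigl(1-\sqrt{1-r_{1,4}}\bigr)^2\le 1$ completely settles the only instance the paper actually needs downstream, namely $m=4$.

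The gap is the single non-hub coincident pair for general $m\ge 5$. Your reformulation $h_m=h'+2\,\mathrm{Tr}\bigl[\chi(\psi_1-\Sigma)\bigr]-1$ followed by ``optimize over $\chi$'' is essentially circular: maximizing that expression over the doubled state $\chi$ \emph{is} the statement to be proved, so replacing the trace by $\lambda_{\max}(\psi_1-\Sigma)$ repackages rather than reduces the difficulty. The proposed attack via Gram-matrix positivity / Fubini--Study triangle inequalities and a reduction to a coplanar configuration is plausible --- note $\lambda_{\max}(\psi_1-\Sigma)\le\lambda_{\max}(\psi_1)=1$ since $\Sigma\succeq 0$, with equality forcing $r_{1,j}=0$ for all remaining $j$ and hence $h'\le 0$, so the two terms do trade off --- but it is not carried out, and the bookkeeping over the $m-3$ states in $\Sigma$ is exactly where the content lies. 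As written, your proposal proves the lemma for $m=4$, and for all $m$ whenever the coincidence involves vertex $1$ or there is more than one coincidence, but not the full statement for $m\ge 5$.
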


\begin{proof}
    Without loss of generality, we may consider $r_{1,k^\star}=1$ for some $k^{\star} \neq 1$. Let $r=r(\underline\psi)$ be any pure state realization satisfying this constraint. In this case, we must have that $\vert \psi_1\rangle = \vert \psi_{k^\star}\rangle$. Therefore, $r_{1,k}=r_{k^\star, k}$ for all $k=\{2,\dots,m\}\backslash\{k^{\star}\}$. The inequality $h_m(r) \leq 1$ is then written as 
    \begin{align*}
        h_m(r) &= \sum_{k=2}^m r_{1,k}-\sum_{i=2}^{m-1} \sum_{j>i}^mr_{i,j} \\&= 1+\sum_{{\begin{array}{c}k=2\\k\neq k^\star\end{array}}}^m r_{k^\star, k}-\sum_{i=2}^{m-1} \sum_{j>i}^mr_{i,j}\\
        &= 1-\sum_{{\begin{array}{c}i=2\\i\neq k^\star\end{array}}}^{m-1} \left( \sum_{{\begin{array}{c}j>i\\j\neq k^\star\end{array}}}^m r_{i,j} \right) \leq 1
    \end{align*}
    where we have used the fact that every element $r_{k^\star, k}$ is present in the sum $\sum_{i=2}^{m-1} \sum_{j>i}^mr_{i,j}$.
\end{proof}

Recall that $\mathfrak{Q}(\mathcal{G})$ is the set of all quantum realizable edge-weights given an event graph $\mathcal{G}$. Any stabilizer realization is (evidently) a quantum realization. In the remainder of this Appendix, we will focus on situations concerning stabilizer realizations.

Let us now show that, assuming an edge-weight is non-zero then its stabilizer realization has a lower bound.
\begin{lemma}\label{lemma: lower bound STAB overlaps}
    Take three (arbitrary) $n$-qubit stabilizer states $\ket{\psi_1}$, $\vert \psi_2 \rangle$, and $\vert \psi_3 \rangle$ such that $r_{1,2} = 1/2^{N_2}$, $r_{1,3} = 1/2^{N_3}$ and $r_{2,3}\neq 0$. Then, $r_{2,3} \geq 1/2^{N_2+N_3}$ where $N_2,N_3\in \{0,\dots,n\}$.
\end{lemma}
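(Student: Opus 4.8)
The plan is to translate the overlap of two stabilizer states into the dimension of the intersection of their stabilizer groups, viewed as $\mathbb{F}_2$-linear subspaces, and then bound that intersection by elementary linear algebra. To each $n$-qubit stabilizer state $\vert \psi_i\rangle$ I would associate its stabilizer group $S_i \leq \mathcal{P}_n$, a maximal abelian subgroup of order $2^n$, and let $G_i \subseteq \mathbb{F}_2^{2n}$ be the $n$-dimensional subspace obtained by forgetting the signs of its elements (the underlying Pauli strings in the symplectic ``check-matrix'' picture).

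Using the projector identity $\vert \psi_i\rangle\langle\psi_i\vert = 2^{-n}\sum_{g \in S_i} g$ together with the fact that $\mathrm{Tr}(gh) = \pm 2^n$ exactly when $gh = \pm\mathbb{1}$ and vanishes otherwise, I would write $r_{i,j} = 2^{-2n}\sum_{g\in S_i,\,h\in S_j}\mathrm{Tr}(gh)$. Only pairs with $g = \pm h$ contribute, i.e.\ those whose common underlying Pauli string lies in $G_i\cap G_j$, and the relative signs of the two lifts define a homomorphism $\chi: G_i\cap G_j \to \{\pm 1\}$. When $\chi$ is nontrivial its kernel has index two and the contributions cancel, so $r_{i,j}=0$; when $\chi$ is trivial every term contributes $+2^n$, giving the clean formula
$$r_{i,j} = 2^{-(n - \dim_{\mathbb{F}_2}(G_i \cap G_j))}$$
whenever $r_{i,j}\neq 0$. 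This recovers Theorem~\ref{theorem: quantization} and identifies $N_2 = n - \dim(G_1\cap G_2)$, $N_3 = n - \dim(G_1\cap G_3)$, and $N_{23} := n - \dim(G_2\cap G_3)$, the last being well defined precisely because the hypothesis $r_{2,3}\neq 0$ forces the corresponding $\chi$ to be trivial.

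With the three overlaps written this way, the target inequality $r_{2,3}\geq 1/2^{N_2+N_3}$ reduces to the purely linear-algebraic claim $\dim(G_2\cap G_3)\geq n - N_2 - N_3$. I would prove it inside the ambient space $G_1$: both $G_1\cap G_2$ and $G_1\cap G_3$ are subspaces of $G_1$, so Grassmann's dimension formula yields
$$\dim\big((G_1\cap G_2)\cap(G_1\cap G_3)\big) \geq \dim(G_1\cap G_2)+\dim(G_1\cap G_3)-\dim(G_1) = (n-N_2)+(n-N_3)-n.$$
Since $(G_1\cap G_2)\cap(G_1\cap G_3) = G_1\cap G_2\cap G_3 \subseteq G_2\cap G_3$, monotonicity of dimension under inclusion gives $\dim(G_2\cap G_3)\geq n - N_2 - N_3$, hence $N_{23}\leq N_2+N_3$ and $r_{2,3}=2^{-N_{23}}\geq 2^{-(N_2+N_3)}$, as claimed.

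The one delicate point is the first step: one must check that a \emph{nonzero} overlap forces the sign homomorphism $\chi$ to be trivial, which is what pins the magnitude to \emph{exactly} $2^{-(n-\dim)}$ rather than merely bounding it, and makes $N_{23}$ equal to $n-\dim(G_2\cap G_3)$. This is where the assumption $r_{2,3}\neq 0$ is used essentially. Everything after that is the elementary inclusion–exclusion above, and no finer information about the stabilizer polytope is required.
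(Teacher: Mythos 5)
Your proof is correct, but it takes a genuinely different route from the paper's. The paper fixes $\ket{\psi_1}=\ket{0^n}$ by a Clifford, writes $\ket{\psi_2}$ and $\ket{\psi_3}$ explicitly as equal-weight superpositions of $2^{N_2}$ and $2^{N_3}$ computational-basis states with phases in $\{\pm1,\pm i\}$, expands $r_{2,3}=2^{-(N_2+N_3)}\bigl|1+\sum_{j,j'} i^{\alpha_j-\beta_{j'}}\braket{b_{j'}|a_j}\bigr|^2$, and observes that the bracketed quantity is a Gaussian integer, hence has squared modulus at least $1$ whenever it is nonzero. You instead work at the level of stabilizer groups: the identity $r_{i,j}=2^{-n}\sum_{P\in G_i\cap G_j}\chi(P)$ (with $\chi$ the relative-sign character, a homomorphism because $S_i\to G_i$ is an isomorphism) correctly yields $r_{i,j}=2^{-(n-\dim(G_i\cap G_j))}$ when $r_{i,j}\neq 0$, and then Grassmann's formula applied to $G_1\cap G_2$ and $G_1\cap G_3$ inside the $n$-dimensional ambient space $G_1$, combined with $G_1\cap G_2\cap G_3\subseteq G_2\cap G_3$, gives $N_{23}\leq N_2+N_3$. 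All the steps check out, including the essential use of $r_{2,3}\neq 0$ to force $\chi$ trivial on $G_2\cap G_3$ so that $N_{23}=n-\dim(G_2\cap G_3)$ is well defined. What your approach buys is structure: it reproves Theorem~\ref{theorem: quantization} as a byproduct, exposes the lemma as a purely combinatorial statement about subspace dimensions (a submodularity-type inequality), and would generalize almost verbatim to qudits of prime dimension by replacing $2$ with $d$. What the paper's approach buys is self-containedness: it needs only the explicit amplitude form of stabilizer states and an elementary fact about Gaussian integers, with no appeal to character orthogonality or the check-matrix formalism.
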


\begin{proof}
    Without loss of generality, we can take $\ket{\psi_1} = \ket{0^n}$. Let us start by considering the case where both $N_2,N_3 \neq 0$ or, equivalently, where $r_{1,2},r_{1,3} \neq 1$. Since $r_{1,2} = 1/2^{N_2}$, we have that $\vert \psi_2 \rangle$ is of the form 
    \begin{equation*}
        \vert \psi_2 \rangle = \frac{1}{2^{N_2/2}} \left( \vert 0^n\rangle + \sum_{j=1}^{2^{N_2}-1} i^{\alpha_j} \ket{a_j}  \right)
    \end{equation*}
    where $\alpha_j \in \mathbb{Z}_4$ and $a_j \in \mathbb{F}_2^n\backslash\{0^n\}$.
    Similarly, $r_{1,3} = 1/2^{N_3}$ implies that the state $\vert \psi_3 \rangle$ is of the form 
    \begin{equation*}
        \vert \psi_3 \rangle = \frac{1}{2^{N_3/2}} \left( \vert 0^n\rangle + \sum_{j=1}^{2^{N_3}-1} i^{\beta_j} \ket{b_j}  \right),
    \end{equation*}
    where $\beta_j \in \mathbb{Z}_4$ and $b_j \in \mathbb{F}_2^n\backslash\{0^n\}$. From this, we see that
    \begin{equation*}
        r_{2,3} = \frac{1}{2^{N_2+N_3}} \left| 1 + \sum_{j,j^{\prime}} i^{\alpha_j - \beta_{j^{\prime}}} \braket{b_{j^{\prime}}|a_j} \right|^2.
    \end{equation*}
   If $r_{2,3} \neq 0$, it is clear from the expression above that $r_{2,3}\geq 1/(2^{N_2+N_3})$.

   Finally, we note that, if either $N_i= 0$, we have the corresponding state $\ket{\psi_i} = \ket{0^n}$. In that case, it is clear that $r_{2,3} = 1/2^{N_j}$, with $j\neq i$, which complies with the lower bound established above. 
\end{proof}

Next, we demonstrate a result concerning realizations containing orthogonal states. This is the most important stepping stone to the proof of Theorem~\ref{theorem: h4 is a magic witness} because realizations involving null edge-weights are significantly harder to analyze with respect to inequality violations. 
\begin{lemma}
    Let $\mathcal{G} = \mathcal{K}_4$ and consider a quantum realization $r=r(\underline\rho_{\mathrm{STAB}}) \in \mathfrak{Q}(\mathcal{K}_4)$, where $\mathrm{STAB}$ denotes the set of $n$-qubit stabilizer states. If such a realization assigns to any two vertices two orthogonal states, then $h_4(r) \leq 1$.
    \label{lemma: no orthogonal states}
\end{lemma}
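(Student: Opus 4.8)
The plan is to reduce the statement to a careful but elementary case analysis over pure stabilizer states. First I would invoke Theorem~\ref{theorem: functional max} to restrict attention to pure stabilizer realizations $\underline\psi = \{\vert\psi_1\rangle,\dots,\vert\psi_4\rangle\}$: the maximum of the convex-linear functional $h_4$ over stabilizer realizations is attained at pure states of the convex hull, which are pure stabilizer states. By Lemma~\ref{lemma: K4 equal labels} I may assume no two of the four states are parallel (any overlap equal to $1$ already forces $h_4\le1$), and Theorem~\ref{theorem: quantization} then guarantees every overlap lies in $\{0\}\cup\{2^{-k}:1\le k\le n\}$, so each nonzero overlap is at most $1/2$; by hypothesis at least one overlap vanishes. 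Throughout I set $\vert\psi_1\rangle=\vert 0^n\rangle$ and write $r_{1,2}=2^{-a}$, $r_{1,3}=2^{-b}$, $r_{1,4}=2^{-c}$ whenever these are nonzero, and I write $P_j=\vert\psi_j\rangle\langle\psi_j\vert$.

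The case analysis follows the asymmetry of $h_4$ in \eqref{eq: h4}, where vertex $1$ carries only positive coefficients while vertices $2,3,4$ are interchangeable. If a vanishing overlap touches vertex $1$, say $r_{1,2}=0$, then $h_4 = r_{1,3}+r_{1,4}-r_{2,3}-r_{2,4}-r_{3,4}\le r_{1,3}+r_{1,4}\le \tfrac12+\tfrac12=1$. Otherwise all $r_{1,j}\neq0$ and the zero sits among $\{2,3,4\}$; by symmetry take $r_{2,3}=0$. If both remaining negative edges are nonzero, Lemma~\ref{lemma: lower bound STAB overlaps} (pivot $\vert\psi_1\rangle$) gives $r_{2,4}\ge 2^{-(a+c)}$ and $r_{3,4}\ge 2^{-(b+c)}$, whence
\begin{equation*}
 h_4 \le 2^{-a}+2^{-b}+2^{-c}-2^{-(a+c)}-2^{-(b+c)} = (2^{-a}+2^{-b})(1-2^{-c})+2^{-c}\le 1,
\end{equation*}
using $2^{-a}+2^{-b}\le1$. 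If instead all three edges among $\{2,3,4\}$ vanish, the states are mutually orthogonal, $P_2+P_3+P_4$ is a rank-$3$ projector, and $h_4 = \langle\psi_1\vert(P_2+P_3+P_4)\vert\psi_1\rangle\le1$.

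The genuinely hard sub-case is when exactly one further edge vanishes, i.e.\ $\vert\psi_2\rangle$ is orthogonal to both $\vert\psi_3\rangle$ and $\vert\psi_4\rangle$ while $r_{3,4}\neq0$, so $h_4=r_{1,2}+r_{1,3}+r_{1,4}-r_{3,4}$. Applying Lemma~\ref{lemma: lower bound STAB overlaps} to the triangle $(1,3,4)$ gives $r_{3,4}\ge 2^{-(b+c)}$, hence $h_4\le 2^{-a}+1-(1-2^{-b})(1-2^{-c})$, which is $\le1$ as soon as $2^{-a}\le (1-2^{-b})(1-2^{-c})$; since the right-hand side is always $\ge1/4$, this settles every case except $a=1$ together with $\min(b,c)=1$. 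In that residual corner I would descend to the explicit description of stabilizer states as uniform superpositions $\vert\psi_j\rangle=2^{-d_j/2}\sum_{x\in V_j}i^{q_j(x)}\vert x\rangle$ over linear subspaces $V_j\ni 0^n$ with $\mathbb{Z}_4$ phase functions $q_j$. Taking (after the $3\leftrightarrow4$ symmetry) $b=1$, the conditions $\dim V_2=\dim V_3=1$ force $V_2=V_3=\{0,v\}$, orthogonality to $\vert\psi_4\rangle$ forces $v\in V_4$, hence $V_3\subseteq V_4$, and the two phase-cancellation conditions $q_3(v)-q_2(v)=q_4(v)-q_2(v)=2\pmod 4$ subtract to $q_4(v)=q_3(v)$. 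The unique common support of $\vert\psi_3\rangle,\vert\psi_4\rangle$ then contributes in phase, giving $r_{3,4}=2\,r_{1,4}$, so that with $r_{1,2}=r_{1,3}=\tfrac12$ one finds $h_4=1+r_{1,4}-2r_{1,4}=1-r_{1,4}\le1$ (and $r_{1,4}<\tfrac12$, else $\vert\psi_3\rangle,\vert\psi_4\rangle$ would be parallel).

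The main obstacle is precisely this last configuration, where one state is orthogonal to two others yet keeps a large overlap with the apex. There the generic estimates are individually too weak: both the operator-norm bound $h_4\le 1+\sqrt{r_{3,4}}\,(1-r_{1,2}-\sqrt{r_{3,4}})$, obtained by decomposing $\vert\psi_1\rangle$ along $\vert\psi_2\rangle$ and its complement, and the multiplicative lower bound $r_{3,4}\ge r_{1,3}r_{1,4}$ fail to give $1$ for large $c$, so one must exploit the affine-subspace-plus-phase structure to pin down $r_{3,4}$ exactly. The remaining work (verifying the symmetry reductions and that no hidden parallel pair arises) is routine once this computation is in place.
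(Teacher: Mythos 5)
Your proof is correct, but it is organized quite differently from the paper's, and in a way that pays off. The paper's proof enumerates configurations by the number of vanishing overlaps (three, exactly two with nine sub-configurations reduced by symmetry to two, exactly one) and resolves each by writing out explicit stabilizer states in the form of Eq.~\eqref{eq: generic stabilizer state}; it never actually deploys Lemma~\ref{lemma: lower bound STAB overlaps} inside this argument. You instead split on whether a null edge touches the apex vertex $1$ (trivial, since $r_{1,j}+r_{1,k}\le 1$), and otherwise use the multiplicative bound $r_{j,k}\ge r_{1,j}\,r_{1,k}$ from Lemma~\ref{lemma: lower bound STAB overlaps} to collapse the generic configuration into the single algebraic estimate $(2^{-a}+2^{-b})(1-2^{-c})+2^{-c}\le 1$; the fully orthogonal triangle is dispatched by the rank-three projector bound $\langle\psi_1\vert(P_2+P_3+P_4)\vert\psi_1\rangle\le 1$, which is cleaner than the paper's incompatibility argument for the three-null-overlap case. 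Both proofs ultimately have to open up the affine-support-plus-$\mathbb{Z}_4$-phase structure of stabilizer states for the one genuinely hard configuration (a vertex orthogonal to two others while the remaining negative edge stays small), and there your computation $r_{3,4}=2\,r_{1,4}$, hence $h_4=1-r_{1,4}$, checks out: orthogonality to $\vert\psi_2\rangle$ forces both $V_3\subseteq V_4$ and the in-phase cancellation condition, exactly as you claim. The net effect is that your route isolates the explicit state-form analysis to a single residual corner ($a=1$, $\min(b,c)=1$) rather than spreading it across every sub-case, at the price of leaning on Lemma~\ref{lemma: lower bound STAB overlaps} and the assumption (harmless here, and justified by Lemma~\ref{lemma: K4 equal labels} and Theorem~\ref{theorem: quantization}) that all overlaps with the apex are of the form $2^{-k}$ with $k\ge 1$.
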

\begin{proof}
    Consider the set of four (pure) $n$-qubit stabilizer states: $\{ \ket{\psi_1},\,\ket{\psi_2},\,\ket{\psi_3},\,\ket{\psi_4} \}$, where $\ket{\psi_i}$ is the stabilizer state associated with the $i$th vertex.
    The following observations follow trivially from Lemma~\ref{lemma: no state repetition} when considering realizations with stabilizer states: (i) If four or more overlaps are zero, Eq.~\eqref{eq: h4} cannot be violated; (ii) If $\ket{\psi_1}$ is orthogonal to any of the other states, again no violation of Eq.~\eqref{eq: h4} is possible; (iii) To achieve a violation, at least two of the overlaps $r_{1,j}$ must equal $1/2$ and the remaining overlap with positive contribution must obey $r_{1,k}>r_{2,3} + r_{2,4} + r_{3,4}$ with $k\neq j$. 

    Throughout, we take $r_{1,2} = 1/2$ and $\ket{\psi_2} = \ket{0^n}$ which imposes that $\ket{\psi_1} = (\ket{0^n} + i^{\alpha}\ket{s})/\sqrt{2},$ where $\alpha \in \mathbb{Z}_4$, $s\in \mathbb{F}_2^n\backslash\{0^n\}$, and $\ket{s}$ denotes the corresponding computational-basis state. Moreover, we can set $\alpha = 0$ because there is always a Clifford unitary that transforms $(\ket{0^n} + i^{\alpha}\ket{s})/\sqrt{2}$ into $(\ket{0^n} + \ket{s})/\sqrt{2}$ while leaving the state $\ket{0^n}$ unchanged. Thus, for simplicity, we take $\ket{\psi_1} = (\ket{0^n} + \ket{s})/\sqrt{2}$. All of this is done wlog.
    
    If we have three overlaps equal to zero, the only way for a violation to occur is that: (i)~$r_{2,3} = r_{2,4} = r_{3,4} = 0,$ (ii)~$r_{1,3} = 1/2$, and (iii)~$r_{1,4}>0$. Note that the roles of $r_{1,3}$ and $r_{1,4}$ could be switched, leading exactly to the same conclusion. We will now show that these three conditions are incompatible. A generic stabilizer state takes the form:
    \begin{equation}\label{eq: generic stabilizer state}
        \ket{\psi_j} = \frac{1}{\sqrt{|\mathcal{W}_j|}}\sum_{w\in \mathcal{W}_j} i^{\alpha_w} \ket{w}\,
    \end{equation}
    where $\mathcal{W}_j \subseteq \mathbb{F}_2^n$ and $|\mathcal{W}_j| = 2^{N_j}$, for some $N_j\in\{0,n\}$. For the state to be a stabilizer state, $\mathcal{W}_j$ and $\alpha_w$ must possess specific properties; these are irrelevant for the purposes of our proof and we will therefore omit them, but the interested reader is pointed to Appendix~A of Ref.~\cite{Nest2010beyondGK} or Theorem~9 of Ref.~\cite{garcia2017onthegeometry} for details.
    
    Since, $r_{2,3} = 0$, for the state $\ket{\psi_3}$, $0^n$ cannot belong to $\mathcal{W}_3$. Combining this observation with condition (ii), and given that $\ket{\psi_1} = (\ket{0^n} + \ket{s})/\sqrt{2},$ it clear that $\ket{\psi_3} = \ket{s}$. Because $r_{2,4} = r_{3,4} = 0$, the remaining state $\ket{\psi_4}$ must be a linear combination of computational-basis states so that $0^n,s \notin \mathcal{W}_4$. This necessarily means that $r_{1,4} = 0$, violating condition~(iii). 
    
    If exactly two of the overlaps $\{r_{2,3}, r_{2,4},r_{3,4}\}$ are zero, Fig.~\ref{fig: ort_proof}(a) illustrates the nine possible combinations of edge-weight assignments that could potentially lead to violations. Fortunately, symmetry constraints illustrated therein mean that we can restrict ourselves to only two different sub-cases.
    \begin{figure}[t]
    \centering
    \includegraphics[width=\columnwidth]{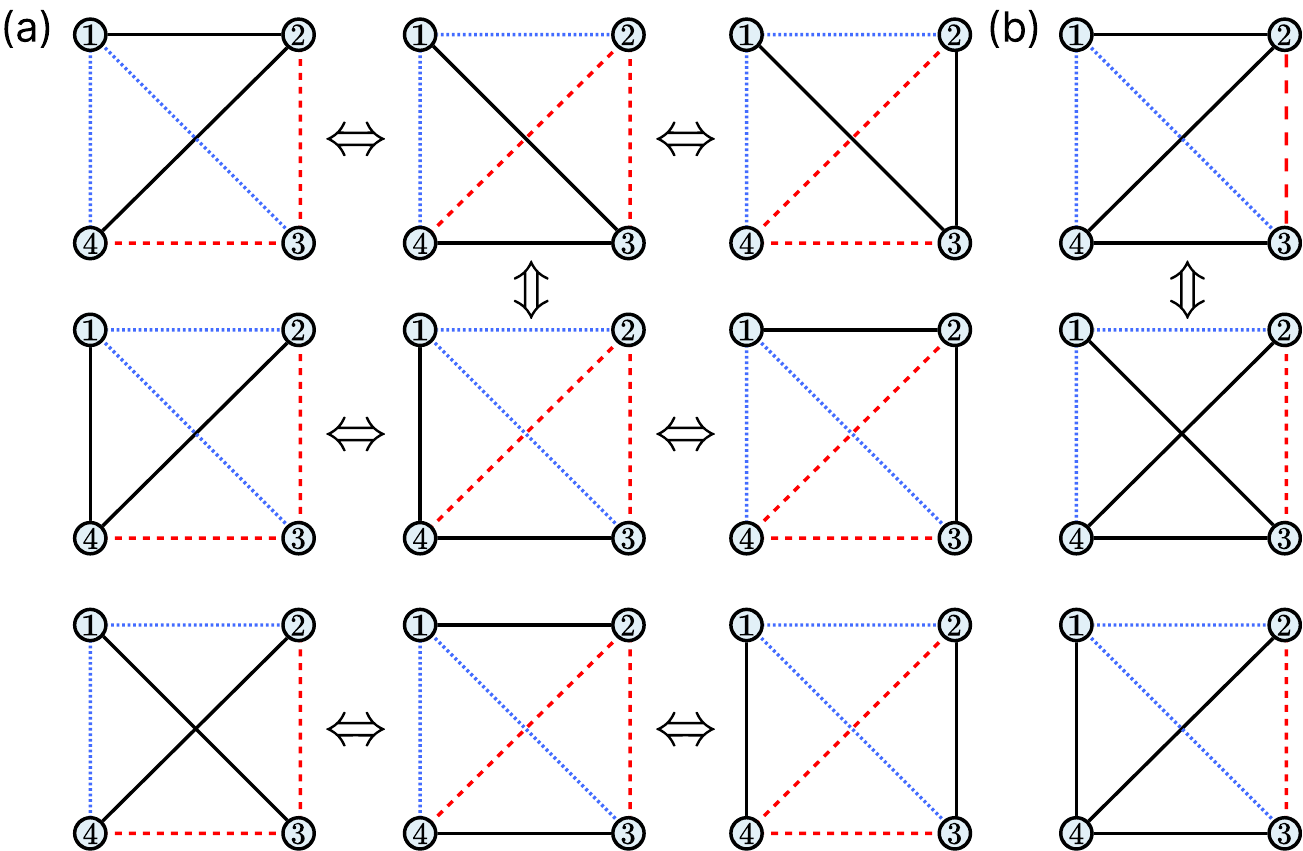}
    \caption{\textbf{Vertex assignments with null edge-weights.} The figure illustrates the cases where (a) exactly two overlaps are zero and (b) there is exactly one null overlap. Dashed-red lines indicate edges of the graph with an assigned weight equal to zero, dotted-blue lines represent the edges assigned with the value 1/2, while solid-black lines depict edges with positive (but arbitrary) edge-weight.}
    \label{fig: ort_proof}
    \end{figure}

    \textit{Subcase1.--}We consider the case depicted in the second row of the second column of Fig.~\ref{fig: ort_proof}(a): $r_{1,2} = r_{1,3} = 1/2$ and $r_{2,3} = r_{2,4} = 0.$ Every scenario in the top two rows is equivalent to this one.
    
    For a violation to occur $r_{1,4} > r_{3,4}.$ The fact that $r_{1,3} = 1/2$ and $r_{2,3} = 0$ implies that $\ket{\psi_3} = \ket{s}.$ The state placed at the fourth vertex can admit the form in Eq.~\eqref{eq: generic stabilizer state}. Because $r_{2,4}=0$ then $0^n\notin \mathcal{W}_4$; contrarily, since $r_{3,4}\neq 0$, $s\in \mathcal{W}_4.$ Immediately, this means that $r_{1,4} = 1/2^{N_4+1}$ while $r_{3,4} = 1/2^{N_4},$ which means that $r_{3,4} > r_{1,4},$ and therefore no violation can occur. 

    \textit{Subcase2.--} We now consider the cases in the bottom row of Fig.~\ref{fig: ort_proof}(a) which are all equivalent between themselves, but non-equivalent to the six cases in the top two rows. We take the edge-weight assignment on the first column: $r_{1,2} = r_{1,4} = 1/2$ and $r_{2,3} = r_{3,4} = 0.$ This means that, for a violation to occur $r_{1,3} > r_{2,4}.$ The condition that $r_{1,4} = 1/2$ enforces one of the following three forms for $\ket{\psi_4}:$
    \begin{equation*}
        \ket{\psi_4} = 
        \begin{cases}
            \ket{0^n} \\
            \ket{s} \\
            \frac{\ket{0^n} + i^{\beta}\ket{s}}{\sqrt{2}}, \beta\in\{1,3\}\,. 
        \end{cases}
    \end{equation*}
    Lemma~\ref{lemma: no state repetition} informs us that the first option will lead to no violation; moreover, the second option leads to $r_{2,4} = 0$ taking us back to the three-null-overlaps situation which we already proved leads to no violation. This leaves us with the last option, that is: $\ket{\psi_4} = (\ket{0^n} + i^{\beta}\ket{s})\sqrt{2},$ with $\beta\in \{1,3\}$. Immediately, we see that $r_{2,4} = 1/2$, and therefore it is impossible to meet that condition that $r_{1,3} > r_{2,4},$ so that no violation is possible in this case either.

    Finally, it remains to assess the case where only a single null overlap exists. Fig.~\ref{fig: ort_proof}(b) illustrates the three possible edge-weight assignments that may lead to violations. We note that the top two cases are equivalent, meaning that, again, we have to focus only on two subcases.

    \textit{Subcase1.--}We consider $r_{1,2} = r_{1,4} = 1/2$ and $r_{2,3} = 0$ (second row of Fig.~\ref{fig: ort_proof}(b)). For a violation to occur, the following must hold $r_{1,3} > r_{2,4} + r_{3,4}.$ We note that $r_{1,4} = 1/2$ and $r_{2,4} \neq 0$ implies that $\ket{\psi_4} = (\ket{0} + i^{\beta}\ket{s})/\sqrt{2}$, with $\beta= \{1,3\}$. This fixes $r_{2,4} = 1/2$ which immediately means that the condition $r_{1,3} > r_{2,4} + r_{3,4}$ can never be met, and therefore no violation can occur.
    
    \textit{Subcase2.--}We consider $r_{1,2} = r_{1,3} = 1/2$ and $r_{2,3} = 0.$ This means that for a violation to hold, we must have $r_{1,4} > r_{2,4} + r_{3,4}.$ The fact that $r_{1,3} = 1/2$ and $r_{2,3} = 0$ implies that $\ket{\psi_3} = \ket{s}.$ The state in the fourth vertex can assume the general form given by Eq.~\eqref{eq: generic stabilizer state} where both $0^n$ and $s$ must belong to the set $\mathcal{W}_4$ (otherwise, we fall back into the cases with two or three null overlaps). Automatically this means that $r_{2,4} = r_{3,4} = 1/2^{N_4}.$ On the other hand, $r_{1,4}$ can be (at most) $r_{1,4} = 1/2^{N_4-1},$ which means that the condition $r_{1,4} > r_{2,4} + r_{3,4}$ cannot be met, and therefore no violation can occur.

    This concludes the assessment of all possible cases. Therefore, if any overlap $r_{i,j}$ is zero, no violation of the inequality~\eqref{eq: h4} is possible.
\end{proof}

Finally, we have all the tools needed to prove Theorem~\ref{theorem: h4 is a magic witness} of the main text.

\begin{proof}[Proof of Theorem~\ref{theorem: h4 is a magic witness}]
    Consider a set of four (pure) $n$-qubit stabilizer states: $\{ \ket{\psi_1},\,\ket{\psi_2},\,\ket{\psi_3},\,\ket{\psi_4} \}$. Recall that, if we want to find a violation of Eq.~\eqref{eq: h4}, no two states can be the same (Lemma~\ref{lemma: no state repetition}) so that: $\ket{\psi_i} \neq \ket{\psi_j}$ for $i\neq j$. Therefore, to obtain $h_4 > 1$ the following conditions must hold: (i) there are at least two $r_{1,j} = 1/2\,$, (ii) the remaining overlap $r_{1,k}$ with $k\neq j$ must obey: $r_{1,k} > r_{2,3} + r_{2,4} + r_{3,4}\,.$

    Without loss of generality, take $\ket{\psi_1} = \ket{0^n}$ and assume that $r_{1,2} = r_{1,4} = 1/2\,.$ Under these assumptions, for a violation to occur we must have $r_{1,3} > r_{2,3} + r_{2,4} + r_{3,4}\,.$
    
    Because $r_{1,2} = 1/2$ this means that $\ket{\psi_2} = (\ket{0^n} + \ket{s})/\sqrt{2}$ where $s\in \mathbb{F}_2^n\backslash\{0^n\}$. Evidently, something similar can be said for $\ket{\psi_4}:$ $\ket{\psi_4} = (\ket{0^n} + i^{\alpha} \ket{w})/\sqrt{2}$, with $w\in \mathbb{F}_2^n\backslash\{0^n\}$ and $\alpha\in \mathbb{Z}_4$. This will impose a constraint on the overlap $r_{2,4}:$
    \begin{equation}
        r_{2,4} = 
        \begin{cases}
        1,\quad\quad \text{ if } w=s \land \alpha = 0 \\
        0,\quad\quad \text{ if } w=s \land \alpha = 2 \\
        1/2,\quad \text{ if } w=s \land \alpha = \{1,3\}\\
        1/4,\quad \text{ if } w\neq s
        \end{cases}\,.
        \label{eq: overlap options}
    \end{equation}
    Lemmas~\ref{lemma: no state repetition} and~\ref{lemma: no orthogonal states} guarantee, respectively, that the first and second options give no violation and we can thus focus on the other two.

    Taking $r_{2,4} = 1/2$ we have: $r_{1,3} > 1/2 + r_{2,3} + r_{3,4},$ which is impossible to verify since at most $r_{1,3}$ can be $1/2$.

    Taking $r_{2,4} = 1/4$ we get the condition $r_{1,3} > 1/4 + r_{2,3} + r_{3,4}.$ In order for this to hold, $r_{1,3}$ must be equal to $1/2$ which implies that the corresponding state must take the form $\ket{\psi_3} = (\ket{0^n} + i^{\beta}\ket{t})/\sqrt{2}$. Immediately, this will imply that the values for $r_{2,3}$ and $r_{3,4}$ are bounded as $r_{2,4}$ in Eq.~\eqref{eq: overlap options}. Since we know realizations with null overlaps to yield no violation (Lemma~\ref{lemma: no orthogonal states}), it is clear that the condition $r_{1,3} > 1/4 + r_{2,3} + r_{3,4}$ can never be met, because $r_{1,3}$ can be at most 1/2.

    Theorem~\ref{theorem: functional max} guarantees that this holds also for mixed stabilizer states. This concludes the proof.
\end{proof}

We conclude this Appendix by generalizing Thereom~\ref{theorem: h4 is a magic witness} for $d$-dimensional qudits.

\begin{theorem}
    There exists no quantum realization $r=r(\underline\rho_{\mathrm{STAB}(d)}) \in \mathfrak{Q}(\mathcal{K}_4)$, where $\mathrm{STAB}(d)$ denotes the set of stabilizer states of $n$ $d$-dimensional qudits, such that $h_4(r) >1$, for any integer value $n\geq 1$.
    \label{theorem: no violation from qudit stabilizer states}
\end{theorem}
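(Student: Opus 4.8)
The plan is to reuse the entire architecture of the proof of Theorem~\ref{theorem: h4 is a magic witness}, replacing the binary field $\mathbb{F}_2^n$ by the ring $\mathbb{Z}_d^n$, but to first reorganize the case analysis so that only the genuinely hard configuration survives. Theorem~\ref{theorem: functional max} applies verbatim, since the $d$-dimensional stabilizer polytope is convex and compact, so I may restrict attention to pure stabilizer realizations and lift the conclusion to mixed states at the very end. I would also record the qudit analogue of Theorem~\ref{theorem: quantization}: the overlap of two \emph{distinct} stabilizer states is either $0$ or at most $1/p$, where $p$ is the smallest prime dividing $d$; for composite $d$ this follows from the prime-power (Galois) case together with the Chinese remainder factorization $\mathbb{Z}_d\cong\prod_i\mathbb{Z}_{p_i^{m_i}}$, under which stabilizer states, and hence overlaps, factorize. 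The no-repetition Lemma~\ref{lemma: no state repetition} is purely linear-algebraic and carries over unchanged, so any violating configuration assigns four distinct states with $r_{1,j}\le 1/p\le 1/2$.

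I would then dispose of odd $d$ immediately: there $p\ge 3$, so the positive contributions obey $r_{1,2}+r_{1,3}+r_{1,4}\le 3/p\le 1$ while the subtracted overlaps are non-negative, giving $h_4(r)\le 1$ with no further work. The remaining work is thus confined to even $d$, where $p=2$ and the largest non-unit overlap is $1/2$, exactly as in the qubit case.

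For even $d$ I would fix $\ket{\psi_1}=\ket{0^n}$ and generalize the submultiplicative bound of Lemma~\ref{lemma: lower bound STAB overlaps} to $r_{i,j}\ge r_{1,i}r_{1,j}$ whenever $r_{i,j}\neq 0$; in the binary case this is immediate because the defining phase sum is a Gaussian integer, while for general $d$ it should follow from the quantization of qudit stabilizer overlaps (most cleanly after the CRT reduction to prime-power factors). Writing $x=r_{1,2}$, $y=r_{1,3}$, $z=r_{1,4}\in(0,1/2]$, I would split on how many of the satellite overlaps $\{r_{2,3},r_{2,4},r_{3,4}\}$ vanish. If none vanish, then $h_4(r)\le x+y+z-xy-xz-yz$, a function that is multilinear in each variable, hence maximized at a corner of $[0,1/2]^3$, where its value is $3/4<1$. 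If exactly one vanishes, the same multilinear estimate with one subtracted term dropped has corner-maximum $1$, so $h_4(r)\le 1$. If all three vanish, then $\ket{\psi_2},\ket{\psi_3},\ket{\psi_4}$ are orthonormal and Bessel's inequality gives $x+y+z\le 1$, again $h_4(r)\le 1$.

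The surviving case---exactly two vanishing satellite overlaps, say $r_{2,4}=r_{3,4}=0$ with $r_{2,3}\neq 0$, i.e.\ one state orthogonal to the other two---is the main obstacle, and it is precisely what Lemma~\ref{lemma: no orthogonal states} settles in the qubit setting. Here neither the multilinear bound nor Bessel's inequality closes the gap (both leave room up to $5/4$), so the discrete stabilizer structure is essential. I would re-run the support analysis of Lemma~\ref{lemma: no orthogonal states} over $\mathbb{Z}_d^n$: using that $r_{1,j}\neq0$ forces $0^n\in\mathcal{W}_j$ and the explicit form~\eqref{eq: generic stabilizer state} of a stabilizer state, the orthogonality relations $\langle\psi_4|\psi_2\rangle=\langle\psi_4|\psi_3\rangle=0$ constrain the supports $\mathcal{W}_2,\mathcal{W}_3,\mathcal{W}_4$ and pin the admissible values of $x,y,z,r_{2,3}$ to combinations with $x+y+z-r_{2,3}\le 1$. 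The delicate point, absent in the binary case, is that for $d>2$ the overlaps are genuine $d$-ary cyclotomic (Gauss) sums, so orthogonality can arise from phase cancellation rather than disjoint supports; controlling this cancellation---cleanest after reducing to prime-power $d$ via the CRT factorization and arguing within each Galois factor---is where I expect the real effort to lie. Once this case is closed, Theorem~\ref{theorem: functional max} again lifts the pure-state conclusion to all mixed stabilizer states, completing the proof that no stabilizer realization violates~\eqref{eq: h4}.
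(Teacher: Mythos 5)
Your second paragraph already contains the paper's entire proof, but you do not recognize it as such and instead commit yourself to a large program you never finish. The paper's argument is exactly your ``odd $d$'' observation run for every $d\geq 3$: by the qudit overlap quantization it cites (any two \emph{distinct} non-orthogonal $n$-qudit stabilizer states have overlap $1/d^{N}\leq 1/d$), and by Lemma~\ref{lemma: no state repetition} excluding repeated states, the positive part of $h_4$ is at most $3/d\leq 1$ while the subtracted overlaps only help; the case $d=2$ is Theorem~\ref{theorem: h4 is a magic witness}, already proved and citable. Under the conventions in which that quantization holds --- prime $d$, or Galois qudits of prime-power dimension --- there is no ``even $d>2$'' case left over: for $d=2^{m}$ with $m\geq 2$ the largest non-unit overlap is $1/2^{m}\leq 1/4$ and the same $3/d$ bound closes the argument. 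By splitting on the \emph{smallest prime factor} $p$ of $d$ rather than on $d$ itself, you have manufactured a hard regime (even composite $d$ under the CRT convention, where distinct stabilizer states can have overlap $1/2$) that the theorem, as the paper reads it, does not require; this extra care about composite $d$ is a legitimate observation about the scope of the statement, but it is not free.

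And in that regime your proof is genuinely incomplete. Your treatment of zero, one, or three vanishing satellite overlaps is clean (the multilinear corner bound and Bessel's inequality are arguably tidier than the case analysis in Lemma~\ref{lemma: no orthogonal states}), but the case of exactly two vanishing satellite overlaps is explicitly left open: you say controlling the $d$-ary phase cancellations is ``where I expect the real effort to lie,'' which is an acknowledgement that the key step is missing, not a proof of it. The auxiliary submultiplicativity bound $r_{i,j}\geq r_{1,i}\,r_{1,j}$ for general $d$ (the analogue of Lemma~\ref{lemma: lower bound STAB overlaps}) is likewise only asserted. As written, the proposal establishes the theorem for all $d$ whose smallest prime factor is at least $3$, and would cover $d=2$ had you invoked Theorem~\ref{theorem: h4 is a magic witness}, but it does not prove it for the even composite dimensions you set out to include.
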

\begin{proof}
    The overlap $\left| \langle \psi \vert \phi \rangle \right|^2$ of any two non-orthogonal stabilizer states, $\ket{\psi}$ and $\ket{\phi},$ of $n$ qudits of dimension $d$ can assume value $1/d^{N}$ with $0 \leq N \leq n\,$, see Lemma~2 of Ref.~\cite{kueng2015qubit} on the overlap between pure stabilizer states. 

    Again, Lemma~\ref{lemma: no state repetition} guarantees that, if any state is repeated, there is no violation of the inequality, so we assume all states to be different. The positive terms in $h_4(r)$ lead to $r_{1,2} + r_{1,3} + r_{1,4} = 3/d\,,$ in the best case. For $d = 3$ this leads to a value of at most $1$ and for $d>3$ the value will be smaller than one. As a consequence, it is immediately realized that no violation of the inequality is possible.
\end{proof}

\section{Full set magic and overlap estimation}\label{app: full set magic}

In this Appendix, we present numerical evidence for the ability of odd cycle inequalities to witness full set magic. As can bee seen from Table~\ref{tab: Number of magic states}, letting $\underline\rho = \{\rho_i\}_{i=1}^m \subseteq \mathcal{D}(\mathbb{C}^2)$ be the set of states such that $c_m(r(\underline\rho))>m-2$, we have that at least $m-2$ states of $\underline\rho$ must be magic states. Depending on how large the violation is, we have that, other than one reference state that can always be unitarily sent inside STAB, all other states must be magic. The second column from Table~\ref{tab: Number of magic states} is obtained as follows. We assume that there are at least two states in a set $\{\vert \psi_i \rangle\}_{i=1}^m$ that are single-qubit stabilizer states. Without loss of generality, we can take $\{\vert 0\rangle, \vert +\rangle\}$. All the remaining $m-2$ states are generic single-qubit states of the form $\ket{\psi_i} = \cos(\theta_i)\vert 0\rangle + e^{i \phi_i}\sin(\theta_i)\vert 1 \rangle$. For all possible combinations of $m-2$ generic states with the two chosen stabilizer states (or, equivalently, all possible edge $\underline\psi$-labelings of the graph  $\mathcal{C}_m$ permuting the two STAB states), we maximize $c_m$ with respect to the variables $\underline\theta = \{\theta_i\}_i$ and $\underline\phi = \{\phi_i\}_i$.

Let us take $c_4$ as an example. The set of quantum states is $\{\vert 0\rangle, \vert +\rangle, \vert \psi_1\rangle, \vert \psi_2\rangle\}$ and two possible maximizations are $c_m(r(\ell_1(\underline\psi))) = -r_{0,+}+r_{+,1}+r_{1,2}+r_{2,0}$
and $c_m(r(\ell_2(\underline\psi))) = -r_{0,1}+r_{1,+}+r_{+,2}+r_{2,0}.$
The tool used for the maximization was \textsf{NMaximize} of Wolfram Mathematica and all available solvers were tested, with the final result presented as the one that found the largest value. Note that, as before, even though we are maximizing for pure states, the results presented in Table~\ref{tab: Number of magic states} hold for generic mixed states. The last column corresponds to the situation wherein the set is allowed to be full set magic, i.e., all states but one are magic.

Interestingly, for even cycle inequalities, we cannot witness full set magic since $m-2$ magic states are sufficient to maximally violate the inequality. Moreover, we can see that the gap between set magic and full set magic decreases as $m$ increases.
\begin{table}[b]
    \centering
    \begin{tabular}{ccc}
    \hline\hline
       $m$  &  $c_m$  &  $c_m^{max}$ \\
    \hline
        3  & 1.2071 & 1.2500 \\
        4  & 2.4142 &  2.4142 \\
        5  & 3.5061 & 3.5225 \\
        6  & 4.5981 & 4.5981 \\
        7  & 5.6468 & 5.6534 \\
        8  & 6.6955 & 6.6955 \\
        9  & 7.7254 & 7.7286 \\
    \hline\hline
    \end{tabular}
    \caption{\textbf{Full set magic bounds.} The first column shows the cycle inequality considered and the second column shows that optimal value considering quantum realizations of sets of states $\underline\rho \subseteq \mathcal{D}(\mathbb{C}^2)$ where at least two states $\rho_i,\rho_j \in \mathrm{STAB}$. The last column presents the optimal tight values found in Appendix~\ref{app: SDP}. Interestingly, even cycles cannot witness full set magic, but odd cycles can.}
    \label{tab: Number of magic states}
\end{table}

We take this opportunity to discuss different strategies for evaluating two-state overlaps. One option is to consider prepare-and-measure estimation. In this case, a preparation stage prepares states $\vert \psi_i\rangle = U_i \vert 0\rangle$, while a measurement stage acts as a projection onto $\langle \psi_j \vert = \langle 0 \vert  U_j^\dagger$. Assuming we are in the regime of full set magic, each overlap inequality will certify the nonstabilizerness of all the preparations $U_i$ \emph{and} all measurements $U_j$.

Another possibility is to use the SWAP-test~\cite{buhrman2001quantum} which, despite using a nonstabilizer gate (the Fredkin gate), can unambiguously witness nonstabilizerness of the quantum states that are sent by a third party (sender). If we want to avoid the use of auxiliary qubits and nonstabilizer operations to estimate the overlaps, we can instead use Bell measurements~\cite{haug2023scalable}.

Finally, in the context of linear-optical implementations, we can use the Hong-Ou-Mandel effect~\cite{hom87}. Any linear-optical interferometer that is insensitive to internal degrees of freedom of the photons (e.g. polarization, frequency, time of arrival) has outcomes determined only by unitary-invariant properties of the spectral functions describing them \cite{shchesnovich2018collective, giordani2021witnesses}. This means the Hong-Ou-Mandel effect, or interferometry in more complex multimode interferometers, can be used to directly estimate those invariant properties. Such a test could be used to certify the nonstabilizerness of these states.

\end{document}